\documentclass[11pt,a4paper]{article}

\usepackage[a4paper,margin=2.15cm]{geometry}
\usepackage{setspace}
\usepackage{amsmath,amssymb,amsthm,mathtools}
\numberwithin{equation}{section}
\usepackage{bm}
\usepackage{tensor}
\usepackage{slashed}
\usepackage{cancel}

\usepackage{booktabs}
\usepackage{array}
\usepackage[dvipsnames]{xcolor}
\usepackage{graphicx}
\usepackage{caption}
\usepackage{subcaption}
\usepackage{tabularx}
\usepackage{booktabs}

\usepackage{comment}

\usepackage[colorlinks=true,linkcolor=blue,citecolor=blue,urlcolor=blue]{hyperref}

\usepackage[numbers,sort&compress]{natbib}

\newcommand{\M}{\mathcal{M}}
\newcommand{\B}{\mathcal{B}}
\newcommand{\T}{\mathcal{T}}
\newcommand{\K}{\mathcal{K}}
\newcommand{\nn}{\nonumber}
\newcommand{\defeq}{\coloneqq}

\newcolumntype{L}[1]{>{\raggedright\arraybackslash}p{#1}}
\newcolumntype{C}[1]{>{\centering\arraybackslash}p{#1}}
\newcolumntype{Y}{>{\raggedright\arraybackslash}X}

\newtheorem{theorem}{Theorem}
\newtheorem{corollary}{Corollary}
\newtheorem{prop}{Proposition}
\newtheorem{lemma}{Lemma}

\theoremstyle{definition}

\theoremstyle{remark}
\newtheorem{remark}{Remark}

\title{\bf Thermomechanics of Horndeski Scalar Hair\\ with Spacelike Gradients}

\author{Marcello Miranda$^{a,b}$,
$\ $
Marco Sebastianutti$^{c,d}$,
$\ $
Andrea Giusti$^{c,d,e}$\thanks{E-mail: andrea.giusti9@unibo.it},
\\
\\
$^a${\em Scuola Superiore Meridionale, Via Mezzocannone 4, I-80134 Napoli, Italy}
\\
\\
$^b${\em I.N.F.N., Sezione di di Napoli, Via Cinthia 9, I-80126 Napoli, Italy}
\\
\\
$^c${\em DIFA, University of Bologna, I-40126 Bologna, Italy}
\\
\\
$^d${\em I.N.F.N., Sezione di Bologna, I.S.~FLAG, I-40127 Bologna, Italy}
\\
\\
$^e${\em Alma Mater Research Center on Applied Mathematics $({\cal AM}^2)$,}\\ 
{\em University of Bologna, I-40123 Bologna, Italy}
}

\date{ }

\begin{document}

\maketitle

\begin{abstract}
\noindent We develop a thermomechanical framework for viable Horndeski scalar hair with spacelike scalar-field gradients.
Adapting the $(1{+}1{+}2)$ formalism to the scalar configuration, we apply the effective imperfect-fluid decomposition to the model and derive general results governing its thermomechanical properties.
Specifically, we infer the constitutive relations for the effective medium, in both observer-dependent and observer-independent approaches, thereby identifying the conditions under which the scalar hair admits thermomechanical interpretations.
The observer-dependent approach leads to an analogy with anisotropic media characterized by a preferred spatial direction, whereas the observer-independent view shows similarities with layered media. The general results are then specialized to static, spherically symmetric configurations with scalar hair depending solely on the areal radius.
Interestingly, this scenario allows us to connect the thermomechanical framework with the Tolman--Ehrenfest criterion for thermal equilibrium.
\end{abstract}

\newpage

\section{Introduction}
\label{sec:intro}

Scalar-tensor theories are among the simplest extensions of General Relativity obtained by introducing an additional scalar gravitational degree of freedom. The earliest fully developed proposal in this family is the celebrated Brans--Dicke theory~\cite{Brans:1961sx}, while Horndeski gravity provides the most general four-dimensional scalar-tensor theory with a single scalar field and second-order Euler--Lagrange equations under the standard assumptions of locality and diffeomorphism invariance~\cite{Horndeski:1974wa}. Its modern formulation in terms of the functions $G_i(\phi,X)$ has made it a standard framework for cosmology, compact objects, and tests of gravity~\cite{Deffayet:2011gz, Kobayashi:2011nu, Kobayashi:2019hrl}. Following the multimessenger constraint from GW170817~\cite{LIGOScientific:2017vwq, LIGOScientific:2017zic, Creminelli:2017sry, Ezquiaga:2017ekz}, we restrict attention to viable Horndeski gravity, selected by requiring luminal propagation of gravitational waves and described by a $G_2(\phi,X)$ $k$-essence term, a $-G_3(\phi,X)\Box\phi$ cubic Horndeski term, and a conformal nonminimal coupling to the Ricci scalar $G_4(\phi)R$.

A useful way of analyzing scalar-tensor dynamics is to move all non-Einstein terms to the right-hand side of the metric field equations and interpret them as an effective stress-energy tensor. Effective-fluid descriptions have been developed for minimally coupled scalar fields and kinetic-gravity-braiding theories~\cite{Pujolas:2011he, Gergely:2020ncx}. In nonminimally coupled scalar-tensor gravity, the effective source has also been fruitfully compared with a relativistic imperfect fluid~\cite{Faraoni:2018qdr, Faraoni:2021lfc, Faraoni:2021jri}. More recently, the constitutive structure of viable Horndeski gravity has been related to first-order relativistic thermodynamics~\cite{Giusti:2021sku, Miranda:2022wkz, Banerjee:2023ayn, Miranda:2024dhw, Gallerani:2024gdy, Pereira:2025dmk, Giusti:2026ymb}.

The {\em Thermodynamics of Scalar-Tensor Gravity}~\cite{Faraoni:2021lfc, Faraoni:2021jri} (see also~\cite{Giardino:2023ygc}), in its current formulation, relies on the assumption that the scalar-field gradient is timelike. Its normalized gradient then selects a natural four-velocity, and the scalar-comoving $(1{+}3)$ decomposition can be compared directly with the variables of a relativistic imperfect fluid. This has been shown to allow a formal analogy between the constitutive laws of the resulting effective imperfect fluid and those of Eckart's thermodynamics~\cite{Faraoni:2021lfc, Faraoni:2021jri}. Notably, this analogy has been extended to map onto the causal M{\"u}ller--Israel--Stewart model~\cite{Banerjee:2026ulx}. 

The scope of this work is to extend the thermomechanical analysis, based on the imperfect-fluid representation of scalar-tensor theories, to the case of scalar fields with spacelike gradients. Accordingly, we assume that
\begin{equation}
\nabla_a\phi\nabla^a\phi>0\, .
\end{equation}
The normalized scalar gradient defines a preferred spatial direction, which we shall refer to as the \emph{scalar director}, $s^a\propto\nabla^a\phi$, but does not select a unique timelike congruence. The orthogonal complement of the scalar gradient,
\begin{equation}
{\rm Ker} \,(\mathrm d\phi) \equiv (\nabla\phi)^\perp\subset TM\,,
\end{equation}
is, throughout the spacelike-gradient region, a rank-3 Lorentzian subbundle. It contains a family of admissible timelike observers related locally by proper orthochronous ${\rm SO}^\uparrow(1,2)$ transformations. As a consequence, the same effective stress-energy tensor admits different anisotropic-fluid decompositions depending on the choice of the unit timelike vector field tangent to the scalar leaves $\Sigma_c \defeq \phi^{-1} (c) \subset M$, with $c\in \mathbb{R}$. 

Spacelike scalar gradients have previously been considered for minimally coupled scalar fields, where they yield a type-I anisotropic-fluid description~\cite{Gergely:2020ncx}. Extending this picture to viable Horndeski gravity requires accounting for the additional nonminimal-coupling and kinetic-braiding contributions.

This spacelike-gradient setting arises naturally in static exterior regions of spherically symmetric configurations with a symmetry-inheriting radial scalar profile. It is therefore relevant to static branches of scalar hair in black-hole and compact-object spacetimes. Explicit examples and broader studies of hairy configurations have been presented in generalized scalar-tensor and Horndeski gravity~\cite{Sotiriou:2013qea, Babichev:2016rlq}, while systematic analyses have clarified the assumptions underlying hair and no-hair results~\cite{Faraoni:2021nhi, Capuano:2023yyh}. Static scalar-haired solutions in nonminimally coupled $k$-essence provide a further realization of the same spacelike-gradient setting~\cite{Nagy:2021bha}. Our purpose is not to construct or test a new compact-object solution, but to determine what effective continuum structure is supported by the viable Horndeski stress-energy tensor and which parts of it are independent of the chosen fluid observer.

The natural covariant framework for this problem is a $(1{+}1{+}2)$ split adapted to the unit spatial vector $s$~\cite{vanElst:1995eg, Clarkson:2007yp}. In addition to the usual $(1{+}3)$ splitting induced by the unit timelike vector $u$, the vector $s$ singles out a preferred direction within the three-dimensional spatial subspace orthogonal to $u$, further decomposing it into a longitudinal direction along $s$ and a two-dimensional transverse subspace, the so-called ``sheet'', orthogonal to both $u$ and $s$. In scalar-tensor gravity, this formalism has been applied to static and spherically symmetric solutions, Birkhoff-type results, asymptotic flatness, and quasi-local horizons~\cite{Carloni:2013iip}. 
We employ this framework to identify the complete set of effective energy, flux, and stress channels generated by a spacelike scalar gradient. 

Our first result is the full $(1{+}1{+}2)$ decomposition of the viable Horndeski effective scalar stress-energy tensor relative to an arbitrary unit timelike vector field orthogonal to the scalar gradient. We derive the effective energy density, longitudinal and transverse pressures, longitudinal and sheet energy fluxes, and the vector and sheet trace-free anisotropic stresses. 

The main finding is that these frame-dependent variables recombine into an observer-independent representation intrinsic to the timelike hypersurfaces $\Sigma_c$, with constant $c\in \mathbb{R}$. In this scalar-layer formulation, the effective source separates into a normal stress, a leaf-tangential traction, and a stress-energy tensor intrinsic to each scalar leaf. The intrinsic leaf stress contains a term proportional to the trace-adjusted extrinsic curvature of the foliation, while the leaf-tangential traction is locally exact on each leaf and can be written as the leaf gradient of a scalar potential. This representation isolates the geometrical and constitutive content fixed by the scalar field itself from the quantities introduced by a particular observer-dependent fluid decomposition.

The observer-independent form also allows for an invariant analysis of preferred scalar-adapted frames. The causal character of the leaf-tangential traction determines whether its longitudinal-flux projection or its mixed normal--sheet stress projection can be eliminated by changing the timelike observer. Moreover, the intrinsic leaf stress and the extrinsic curvature share their principal directions whenever the nonminimal-coupling response is nonvanishing. This yields geometric criteria for the existence of what we dub principal-curvature observers, leaf-Landau observers, and Landau observers of the complete effective scalar stress-energy tensor.

The paper is organized as follows. Sections~\ref{sec:kinematics} and~\ref{sec:decomposition} introduce the $(1{+}1{+}2)$ kinematics and stress-energy decomposition, while Sec.~\ref{sec:scalar} specializes the construction to a spacelike scalar gradient. Section~\ref{sec:horndeski-effective-stress} derives the viable Horndeski effective stress-energy tensor and its characteristic relations. Sections~\ref{sec:observer-dependent} and~\ref{sec:scalar-layer-representation} develop the observer-dependent and observer-independent representations, respectively. The static spherically symmetric sector is then considered, while the detailed classification of scalar-adapted frames is collected in the appendix. Throughout the paper, we use geometrized units $G=c=1$ and the standard general-relativistic notation of~\cite{Wald:1984rg} and~\cite{Chrusciel:2019lxd}. Unless otherwise stated, Greek indices denote the components in the specific coordinate chart, while Latin indices denote the use of the abstract index convention.

\medskip

\noindent {\bf Note added:} while completing this work, a closely related study~\cite{Gergely:2026drx} appeared on arXiv. Reference~\cite{Gergely:2026drx} focuses on kinetic gravity braiding, a subclass of the viable Horndeski model minimally coupled to gravity, and employs a $(2{+}1{+}1)$ decomposition to investigate the fluid representation of the theory for timelike, spacelike, and null scalar-field gradients. We therefore wish to recognize priority to the author of Ref.~\cite{Gergely:2026drx} for this application. Our analysis nevertheless differs in scope, as we consider the more general viable Horndeski theory, including the nonminimal coupling $G_4(\phi)R$, and focus specifically on the thermomechanical interpretation of the resulting effective stress-energy tensor, with the aim of extending the framework originally formulated in Refs.~\cite{Faraoni:2021lfc, Faraoni:2021jri} to spacelike scalar-field gradients. Furthermore, we emphasize the non-uniqueness of the timelike observer $u\in{\rm Ker}({\rm d}\phi)$ and develop an observer-independent interpretation of the effective medium that we dub the scalar-layer representation. The two analyses are therefore complementary, and we refer the reader to Ref.~\cite{Gergely:2026drx} for the treatment of minimally coupled kinetic gravity braiding.

\section{Kinematics of the \texorpdfstring{$(1{+}1{+}2)$}{(1+1+2)} split} \label{sec:kinematics}

We begin by specifying the basic assumptions that underpin the geometrical construction discussed in this section.

Let $M$ be a smooth, four-dimensional, time-orientable manifold equipped with a Lorentzian metric $g_{ab}$ of signature $(-,+,+,+)$. To avoid scenarios with distributional curvature, we require the metric to be at least $g_{ab} \in C^2(M)$ so that the Levi-Civita connection $\nabla_a$ and the Riemann tensor are globally well-defined and continuous. Additionally, this guarantees that all $(1{+}1{+}2)$ kinematic variables are pointwise well-defined and continuous. The local four-dimensional volume form is denoted by $\eta_{abcd}$. Note that, unless otherwise stated, all constructions below are understood locally. 

On a local open subset $U \subseteq M$ we denote by $u\in \Gamma(TU)$ a unit timelike vector field and by $s\in \Gamma(TU)$ a unit spacelike vector field, which satisfy
\begin{equation}
  u^a u_a=-1\,, 
  \qquad 
  s^a s_a=1\,,
  \qquad
  u^a s_a=0 \, .
\end{equation}
Such fields always exist locally on a Lorentzian manifold. A local timelike direction can be chosen, and a spacelike vector orthogonal to it can then be obtained by a local Gram–Schmidt orthogonalization. However, when extending these fields globally, one may encounter topological obstructions.

The tensor
\begin{equation}
h_{ab}\defeq g_{ab}+u_a u_b
\end{equation}
is the orthogonal projector onto the rank-3 subbundle
\begin{equation}
u^\perp \defeq  \bigcup _{p \in U} \{v\in T_p U \mid g_{ab} (p) \, u^a (p) \, v^b=0\} \, ,
\end{equation}
i.e., the local rest space of the observers with four-velocity $u$.\footnote{Note that $u^\perp$, the rank-3 subbundle of $TU$ (often referred to as a rank-3 distribution on $U$, see~\cite{Lee:2012ism}), need not be integrable, i.e., it does not necessarily admit a 3-dimensional submanifold $\Sigma \subseteq U$ such that $T\Sigma = u^\perp$.} 
Furthermore, the tensor
\begin{equation}
\gamma_{ab}\defeq h_{ab}-s_a s_b = g_{ab}+u_a u_b-s_a s_b \, ,
\end{equation}
defines the orthogonal projector onto the rank-2 subbundle
\begin{equation}
\mathcal{S}\defeq  (\mathrm{Span}(u,s))^\perp \, ,
\end{equation}
that we dub the sheet orthogonal to $\mathrm{Span}(u,s)$.

The induced area two-form on this two-dimensional subspace is
\begin{equation}
\epsilon_{ab}\defeq\eta_{abcd}u^c s^d  \, ,
\end{equation}
and satisfies
\begin{equation}
\epsilon_{ab}u^b=0 \, ,\qquad
\epsilon_{ab}s^b=0 \, ,
\end{equation}
and is the Levi--Civita tensor associated with the metric induced by $\gamma_{ab}$ on the sheet $\mathcal{S}$. 

It is also worth recalling some elementary identities for later convenience.
\begin{lemma}
The following local identities hold:
\begin{align}
  h_a{}^b u_b &=0,
  &
  h_a{}^b s_b &=s_a,
  \\
  \gamma_a{}^b u_b &=0,
  &
  \gamma_a{}^b s_b &=0,
  \\
  h_a{}^c h_c{}^b &=h_a{}^b,
  &
  \gamma_a{}^c\gamma_c{}^b &=\gamma_a{}^b,
  \\
  h^a{}_a&=3,
  &
  \gamma^a{}_a&=2.
\end{align}
\end{lemma}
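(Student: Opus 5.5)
The plan is to verify each identity by direct contraction, using only the definitions $h_{ab}=g_{ab}+u_au_b$ and $\gamma_{ab}=h_{ab}-s_as_b$ together with the normalization and orthogonality relations $u^au_a=-1$, $s^as_a=1$, $u^as_a=0$. Indices are raised and lowered with $g_{ab}$, so $g_a{}^b=\delta_a{}^b$, and every relation is pointwise and algebraic on $U$. No differentiability beyond what is already assumed is needed.

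First I will treat the projections of $u$ and $s$ by $h$. Contracting gives $h_a{}^b u_b = u_a + u_a(u^bu_b) = u_a - u_a = 0$ and $h_a{}^b s_b = s_a + u_a(u^bs_b) = s_a$. The corresponding statements for $\gamma$ then follow from $\gamma_a{}^b = h_a{}^b - s_a s^b$. Indeed, $\gamma_a{}^b u_b = 0 - s_a(s^bu_b)=0$ and $\gamma_a{}^b s_b = s_a - s_a(s^bs_b) = 0$.

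Next I will prove idempotency. Expanding the product gives
\begin{equation}
h_a{}^c h_c{}^b=(\delta_a{}^c+u_au^c)h_c{}^b=h_a{}^b+u_a(u^ch_c{}^b)=h_a{}^b ,
\end{equation}
where the last step uses the first identity with indices raised. For $\gamma$, I write $\gamma_a{}^c\gamma_c{}^b=\gamma_a{}^c(h_c{}^b-s_cs^b)$. The last term vanishes by $\gamma_a{}^cs_c=0$. Since $h$ fixes $\gamma$ (because $\gamma_a{}^ch_c{}^b=h_a{}^b-s_as^b$, using $h$ idempotent and $h s=s$), the product reduces to $\gamma_a{}^b$.

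Finally I will compute the traces. We have $h^a{}_a=\delta^a{}_a+u^au_a=4-1=3$ and $\gamma^a{}_a=3-s^as_a=2$. No step is a real obstacle; the only care needed is to keep the order of index contractions consistent when using symmetry of $h_{ab}$ and $\gamma_{ab}$.
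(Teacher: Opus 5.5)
Your proposal is correct, and it is the same elementary verification the paper has in mind: the paper states the lemma without a written proof, as a direct consequence of $h_{ab}=g_{ab}+u_au_b$, $\gamma_{ab}=h_{ab}-s_as_b$ and the normalization and orthogonality relations. The only implicit steps, $u^ch_c{}^b=0$ and $s^ch_c{}^b=s^b$, follow from the symmetry of $h_{ab}$, as you correctly indicate.
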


We can now continue our presentation of the $(1{+}1{+}2)$ split by introducing the kinematic quantities associated with the timelike vector field $u$. Specifically, we define the acceleration, expansion, shear, and vorticity of $u$, respectively, as
\begin{align}
  \dot{u}_a &\defeq u^b\nabla_b u_a,
  \\
  \Theta &\defeq \nabla_a u^a,
  \\
  \sigma_{ab} &\defeq\left(h_{(a}{}^{c}h_{b)}{}^{d}-\frac{1}{3}h_{ab}h^{cd}\right)\nabla_c u_d\,,
  \\
  \omega_{ab} &\defeq D_{[b}u_{a]} = h_{[b}{}^{c}h_{a]}{}^d\nabla_c u_d,
\end{align}
where the brackets $(\,)$ and $[\,]$ denote normalized symmetrization and antisymmetrization, respectively, and $D$ is the fully projected spatial derivative onto $u^\perp$, i.e., 
$$ 
D_e T^{a_1\ldots a_p}{}_{b_1\ldots b_q}\defeq h_e{}^f h^{a_1}{}_{c_1}\cdots h^{a_p}{}_{c_p} h_{b_1}{}^{d_1}\cdots h_{b_q}{}^{d_q} \nabla_f T^{c_1\ldots c_p}{}_{d_1\ldots d_q} \, ,
$$ 
in particular, $D_a h_{bc}=0$.

One can then easily check the following:\footnote{Recalling the standard musical isomorphisms: given $v \in \Gamma (TU)$ and $\xi \in \Gamma (T^\ast U)$, we define $v^\flat \defeq  g(v, \cdot) \in \Gamma (T^\ast U)$ (i.e., $(v^\flat) _a = v_a = g_{ab} \, v^b$) and $\xi^\sharp \in \Gamma (T U)$ such that $g(\xi^\sharp, \cdot) = \xi (\cdot)$ (i.e., $(\xi^\sharp) ^a = \xi^a = g^{ab} \, \xi_b$). See e.g.~\cite{Chrusciel:2019lxd}.}
\begin{lemma}
The covariant derivative 
$\nabla u^\flat \in \Gamma (T^\ast U \otimes T^\ast U)$ decomposes as
\begin{equation}
  \nabla_b u_a
  =
  -u_b\dot{u}_a
  +\frac13\Theta h_{ab}
  +\sigma_{ab}
  +\omega_{ab}.
  \label{eq:du-decomposition}
\end{equation}    
\end{lemma}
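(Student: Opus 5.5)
The plan is to split $\nabla_b u_a$ with the identity resolution $\delta_a{}^c = h_a{}^c - u_a u^c$, applied to each of the two indices. This gives
\begin{equation*}
\nabla_b u_a = (h_b{}^d - u_b u^d)(h_a{}^c - u_a u^c)\nabla_d u_c .
\end{equation*}
The key input is that $u^c\nabla_d u_c = \tfrac12\nabla_d(u^c u_c) = 0$, which follows from the normalization $u^a u_a=-1$. Every term carrying a contraction $u^c\nabla_d u_c$ therefore vanishes, and only two pieces remain:
\begin{equation*}
\nabla_b u_a = h_b{}^d h_a{}^c\nabla_d u_c - u_b\, h_a{}^c u^d\nabla_d u_c .
\end{equation*}

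Next I would identify the second piece. Since $u^d\nabla_d u_c = \dot u_c$ and $u^c\dot u_c = 0$ by the same normalization identity, we have $h_a{}^c\dot u_c = \dot u_a$. The second piece is thus exactly $-u_b\dot u_a$.

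For the fully projected tensor $h_b{}^d h_a{}^c\nabla_d u_c$, I would write it as the sum of its symmetric and antisymmetric parts in $(a,b)$. The antisymmetric part is $h_{[b}{}^d h_{a]}{}^c\nabla_d u_c$, which is $\omega_{ab}$ by definition. The symmetric part is $h_{(a}{}^c h_{b)}{}^d\nabla_c u_d$; note that swapping the dummy indices keeps the ordering consistent with the definition of $\sigma_{ab}$. Subtracting its trace gives $\sigma_{ab}$ exactly, so the symmetric part equals $\sigma_{ab}+\tfrac13 h_{ab}\,h^{cd}\nabla_c u_d$.

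It remains to show $h^{cd}\nabla_c u_d = \Theta$. Expanding, $h^{cd}\nabla_c u_d = \nabla_c u^c + u^c u^d\nabla_c u_d$, and the last term vanishes by the normalization identity again. Assembling the pieces yields \eqref{eq:du-decomposition}.

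There is no real obstacle here; the only point needing care is the index ordering. With the paper's convention $\omega_{ab}=h_{[b}{}^c h_{a]}{}^d\nabla_c u_d$, the derivative index $c$ is paired with the second slot $b$, and I would check that this matches the ordering $\nabla_b u_a$ on the left-hand side, so that $\omega_{ab}$ appears with a plus sign rather than as $\omega_{ba}$.
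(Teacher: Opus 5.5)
Your proposal is correct and is the standard projector computation the paper has in mind when it says the decomposition can be ``easily checked'' (the paper gives no further proof). You insert $\delta_a{}^c=h_a{}^c-u_au^c$ on both indices, drop terms using $u^c\nabla_d u_c=0$, and split $D_bu_a$ into trace, trace-free symmetric and antisymmetric parts. Your index-ordering check also works: $h_{[b}{}^d h_{a]}{}^c\nabla_d u_c=\tfrac12(D_bu_a-D_au_b)=\omega_{ab}$ with the paper's convention, so $\omega_{ab}$ enters with a plus sign.
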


For the spacelike direction $s$, we can introduce the following quantities:
\begin{align}
\dot{s}_a &\defeq u^b\nabla_b s_a\,, \label{eq:s-dot}\\[3pt]
\widehat{s}_a &\defeq s^bD_b s_a\,, \label{eq:s-hat}\\[3pt]
\widetilde{\Theta} &\defeq D_a s^a = h^{ab}D_a s_b = \gamma^{ab}D_a s_b\,, \label{eq:sheet-expansion}\\[3pt]
\widetilde{\omega} &\defeq \frac12\epsilon^{ab}\gamma_b{}^c\nabla_c s_a
= \frac12\epsilon^{ab}D_b s_a\,, \label{eq:sheet-twist} \\[3pt]
\widetilde{\sigma}_{ab} &\defeq 
\left( \gamma_{(a}{}^c\gamma_{b)}{}^d -\frac12\gamma_{ab}\gamma^{cd} \right) \nabla_c s_d \nonumber\\[3pt]
&\,= \left( \gamma_{(a}{}^c\gamma_{b)}{}^d -\frac12\gamma_{ab}\gamma^{cd} \right) D_c s_d\,, \label{eq:sheet-shear} \\[3pt]
\chi_a &\defeq h_a{}^c s^d\nabla_c u_d = \frac13\Theta s_a +\sigma_{ab}s^b - \omega_{ab}s^b\,. \label{eq:chi-definition}
\end{align}
Throughout the text, the \textit{overdot} and \textit{hat} symbols are adopted to indicate the directional derivatives $\nabla _u \equiv u^a\nabla_a$ and $D_s \equiv s^a D_a$, respectively.

Then one can easily check the following:
\begin{lemma}
The spatial derivative $D s^\flat \in \Gamma ((u^\perp)^\ast \otimes (u^\perp)^\ast)$ decomposes as
\begin{equation}
  D_b s_a
  =
  s_b\widehat{s}_a
  +\frac12\widetilde{\Theta}\,\gamma_{ab}
  +\widetilde{\omega}\,\epsilon_{ab}
  +\widetilde{\sigma}_{ab}\, , \label{eq:director-gradient-decomposition}
\end{equation}
while the full covariant derivative $\nabla s^\flat \in \Gamma (T^\ast U \otimes T^\ast U)$ decomposes as
\begin{equation}
  \nabla_b s_a
  =
  -u_b\dot{s}_a
  +s_b\widehat{s}_a
  +\frac12\widetilde{\Theta}\gamma_{ab}
  +\widetilde{\omega}\,\epsilon_{ab}
  +\widetilde{\sigma}_{ab}+u_{a}\chi_{b} \, .
  \label{eq:ds-decomposition}
\end{equation}
\end{lemma}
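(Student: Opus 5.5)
The plan is to decompose the tensor with the completeness relation $g_a{}^c = -u_a u^c + s_a s^c + \gamma_a{}^c$. Inserting it on both indices of $D_b s_a$ and of $\nabla_b s_a$ splits each tensor into blocks along $u$, $s$ and the sheet. Each block is then identified with one of the quantities \eqref{eq:s-dot}--\eqref{eq:chi-definition}, or shown to vanish. Throughout, the only ingredients are the normalization and orthogonality relations $s^as_a=1$, $u^as_a=0$, together with the projector identities of the first Lemma.

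\emph{First identity.} Since $D_b s_a = h_b{}^d h_a{}^c \nabla_d s_c$ is fully projected onto $u^\perp$, only $s$ and sheet components survive in each slot. Splitting with $h = s\otimes s + \gamma$ gives four pieces.
\begin{itemize}
\item The $s$--$s$ piece $s_b s_a\, s^d s^c D_d s_c$ vanishes, because $s^c D_d s_c = \tfrac12 D_d(s^cs_c)=0$.
\item The same fact kills the mixed piece $s_a\,\gamma_b{}^d s^c D_d s_c$, i.e.\ the piece with $s$ in the slot $a$.
\item The piece $s_b\,\gamma_a{}^c s^d D_d s_c$ equals $s_b\widehat{s}_a$. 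Here one notes that $\widehat{s}_a$ is automatically orthogonal to $s$ by the same normalization argument, and orthogonal to $u$ because $D$ projects. Hence $\gamma_a{}^c\widehat{s}_c=\widehat{s}_a$.
\item The sheet--sheet block $\gamma_b{}^d\gamma_a{}^c D_d s_c$ is a rank-2 tensor on the 2-dimensional sheet. It is split into trace, antisymmetric part and symmetric trace-free part. The trace is $\gamma^{dc}D_ds_c=\widetilde{\Theta}$, which by \eqref{eq:sheet-expansion} gives $\tfrac12\widetilde{\Theta}\gamma_{ab}$ since $\gamma^a{}_a=2$. The symmetric trace-free part is $\widetilde{\sigma}_{ab}$ by definition.
\end{itemize}

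\emph{Antisymmetric sheet part.} This is the step needing a little care. On a 2-dimensional space every antisymmetric tensor is proportional to $\epsilon_{ab}$, so the antisymmetric part equals $\lambda\,\epsilon_{ab}$ for some scalar $\lambda$. To fix $\lambda$, contract with $\epsilon^{ab}$, using $\epsilon^{ab}\epsilon_{ab}=2$. This follows from $\eta^{abcd}\eta_{abef}=-2(\delta^c_e\delta^d_f-\delta^c_f\delta^d_e)$ together with $u^2=-1$, $s^2=1$, $u\cdot s=0$. With the index ordering of \eqref{eq:sheet-twist}, the contraction yields $\lambda=\tfrac12\epsilon^{ab}D_bs_a=\widetilde{\omega}$. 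I expect this to be the main (if mild) obstacle: one has to track the index ordering and the sign of the Lorentzian contraction carefully to confirm the coefficient is exactly $+\widetilde{\omega}$.

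\emph{Second identity.} Write $\nabla_b s_a = g_b{}^d g_a{}^c\nabla_d s_c$ and use $g=-u\otimes u+h$ in each slot.
\begin{itemize}
\item The $h$--$h$ block is $D_bs_a$, already decomposed above.
\item The block with $u$ in the derivative slot and $h$ in slot $a$ is $-u_b\, h_a{}^c u^d\nabla_d s_c$. Since $u^c\dot s_c=-s^c\dot u_c\neq0$ in general, this is only the spatial projection of $\dot s_a$. The remaining $u$--$u$ block, $u_bu_a\,u^du^c\nabla_ds_c=-u_bu_a\,u^d s^c\nabla_d u_c$, recombines with it to give the full $-u_b\dot s_a$.
\item The block with $h$ in the derivative slot and $u$ in slot $a$ is $-u_a\,h_b{}^d u^c\nabla_d s_c = u_a\,h_b{}^d s^c\nabla_d u_c = u_a\chi_b$. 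This uses $u^cs_c=0$ and definition \eqref{eq:chi-definition}.
\end{itemize}
Collecting all blocks gives \eqref{eq:ds-decomposition}. The alternative expression for $\chi_a$ in \eqref{eq:chi-definition} follows by contracting \eqref{eq:du-decomposition} with $h$ and $s$ and using $s^a\dot u$-free terms.
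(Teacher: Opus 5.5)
Your proposal is correct and carries out the routine projection check the paper has in mind; the paper states the lemma with ``one can easily check'' and gives no further argument. The steps all hold: splitting both slots with $g_a{}^c=-u_au^c+s_as^c+\gamma_a{}^c$, killing the $s$-contracted blocks via $s^c\nabla_d s_c=0$, fixing the antisymmetric sheet part with $\epsilon^{ab}\epsilon_{ab}=2$, and recombining the $u$-blocks into $-u_b\dot s_a$ and $u_a\chi_b$. The only blemish is your closing remark on the alternative form of $\chi_a$: the $\dot u$ term drops because $h_a{}^c u_c=0$, not through any $s$-contraction, though that identity is not part of this lemma.
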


Now, it is important to note that the derivatives of $s^\flat$ and $u^\flat$ along $u$ decompose respectively as
\begin{align}
\dot{s}_a = \mathcal A u_a+\alpha_a\,, \qquad \alpha_a \defeq \gamma_a{}^b\dot{s}_b\,, \label{eq:s-dot-decomposition}\\
\dot{u}_a=\mathcal{A}s_a+\beta_a\,, \qquad \beta_a\defeq\gamma_a{}^b\dot{u}_b\, .\label{eq:u-dot-decomposition}
\end{align}
Let us point out that $\mathcal{A}=0=\beta_a$ holds if and only if $u$ is geodesic ($\dot{u} \equiv 0$), whereas if $u$ is non-geodesic then $\dot{u}$ is necessarily spacelike. Furthermore, we note that $\alpha \in \mathcal{S}$ is a sheet vector and that the normalization $s_a s^a=1$ implies $s^a\dot{s}_a=0$, while differentiation of $u^a s_a=0$ along $u$ gives
\begin{equation}
\mathcal A = s^a\dot{u}_a = -u^a\dot{s}_a\,.
\end{equation}
It is also worth distinguishing the quantity $\widehat{s}_a=s^bD_b s_a$, which describes the {\em spatial bending} of the integral curves of $s$ within the spatial hypersurfaces orthogonal to $u$ (assuming that the rank-3 distribution $u^\perp$ is integrable). It should be distinguished from the spacetime curvature vector $s^b\nabla_b s_a$, which describes their curvature in the full spacetime. Then, contracting the full decomposition of $\nabla_b s_a$ with $s^b$ gives
\begin{equation}
s^b\nabla_b s_a = \widehat{s}_a + u_a s^b\chi_b\,. \label{eq:sds-decomposition} 
\end{equation}
The vector $\chi \in \Gamma (u^\perp)$ could be further decomposed, but it is not useful for our purposes.

\section{Imperfect fluid decomposition: \texorpdfstring{$(1{+}3)$}{(1+3)} and \texorpdfstring{$(1{+}1{+}2)$}{(1+1+2)} formulations} \label{sec:decomposition}

Let ${\rm T} \in \Gamma ({\rm Sym}^2 (T^\ast U))$ be a symmetric rank-$(0,2)$ tensor, meant to describe an effective stress-energy tensor. Then its $(1{+}3)$ decomposition with respect to $u$ reads
$$
{\rm T} = \rho \, u^\flat \otimes u^\flat
  +
  p \, {\rm h}
  +
  2 \, {\rm Sym} (q \otimes u^\flat)
  +
  \pi\,,
$$
or, equivalently, in abstract index notation
\begin{equation}
  T_{ab}
  =
  \rho \, u_a u_b
  +
  p \, h_{ab}
  +
  2q_{(a}u_{b)}
  +
  \pi_{ab}\,,\label{eq:T13-decomposition}
\end{equation}
where
\begin{align}
  \rho &\defeq T_{ab}u^a u^b,
  \\
  p &\defeq \frac13 h^{ab}T_{ab},
  \\
  q_a &\defeq -h_a{}^cT_{cd}u^d,
  \\
  \pi_{ab}
  &\defeq
  \left(h_a{}^c h_b{}^d
  -
  \frac13 h_{ab}h^{cd}\right)T_{cd}.
\end{align}
Here, $\rho$ denotes the effective energy density, $p$ is the isotropic pressure, $q_a$ is the heat-flux density 1-form, and $\pi_{ab}$ denotes the anisotropic stress tensor (which is symmetric and trace-free).

This decomposition holds for any symmetric rank-$(0,2)$ tensor, upon choosing the 4-velocity $u$ of a timelike observer (see~\cite{Faraoni:2023hwu}).

In the $(1{+}1{+}2)$ split adapted to $s$, we define the quantities
\begin{align}
  \mathcal Q &\defeq -T_{ab}u^a s^b\,,
  \\
  \mathcal Q_a &\defeq -T_{bc}u^b \gamma^c{}_{a}\,,
  \\
  p_{\parallel} &\defeq T_{ab}s^a s^b\,,
  \\
  p_{\perp} &\defeq \frac12\gamma^{ab}T_{ab}\,.
\end{align}
As a consequence, the isotropic pressure and heat flux split into the sum of their longitudinal and transverse parts, i.e.,
\begin{equation}
    p=\frac13\left(p_{\parallel}+2 p_{\perp}\right),
\end{equation}
\begin{equation}
\label{eq:dec-q}
  q_a = \mathcal Q s_a + \mathcal Q_a, \qquad \mathcal Q\defeq q_a s^a, \qquad \mathcal Q_a\defeq \gamma_a{}^b q_b\, ,
\end{equation}
where we refer to the direction aligned with $s$ as the longitudinal direction, while the quantities associated with $\mathcal{S}$ are referred to as transverse (or sheet) components. Hence, $p_{\parallel}$ and $\mathcal Q$ are longitudinal quantities, whereas $p_{\perp}$ and $\mathcal Q_a$ denote transverse ones.

Then the anisotropic stress tensor admits the canonical $(1{+}1{+}2)$~decomposition
\begin{equation}
  \pi_{ab}
  =
  \widetilde{\pi}
  \left(
  s_a s_b-\frac12\gamma_{ab}
  \right)
  +
  2\widetilde{\pi}_{(a}s_{b)}
  +
  \widetilde{\pi}_{ab} \, ,
  \label{eq:pi112-decomposition}
\end{equation}
where
\begin{align}
  \widetilde{\pi} &\defeq s^a s^b\pi_{ab}=
  \frac23\left(p_{\parallel}-p_{\perp}\right),
  \\
  \widetilde{\pi}_a &\defeq s^d\,\gamma_a{}^c\pi_{cd},
  \\
  \widetilde{\pi}_{ab}
  &\defeq
  \left(
  \gamma_{a}{}^c\gamma_{b}{}^d
  -
  \frac12\gamma_{ab}\gamma^{cd}
  \right)\pi_{cd}.
\end{align}

We can then summarize the results of this section as follows:
\begin{prop} \label{prop1}
Let ${\rm T} \in \Gamma ({\rm Sym}^2 (T^\ast U))$ be a symmetric rank-$(0,2)$ tensor, meant to describe an effective stress-energy tensor. Then its full $(1{+}1{+}2)$ decomposition reads
\begin{equation}\label{eq:T112-full-decomposition}
    T_{ab}= 
    \rho \, u_a u_b
  +
  p_{\parallel} \, s_a s_b
  +
  p_{\perp} \, \gamma_{ab}
  +
  2\mathcal Q \, u_{(a}s_{b)}
  +
  2u_{(a} \, \mathcal Q_{b)}
  +
  2\widetilde{\pi}_{(a} \, s_{b)}
  +
  \widetilde{\pi}_{ab}\, ,
\end{equation}
in abstract index notation.
\end{prop}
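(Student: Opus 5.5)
The plan is to start from the $(1{+}3)$ decomposition \eqref{eq:T13-decomposition}, valid for any symmetric $T_{ab}$ and any unit timelike $u$, and refine each piece using $h_{ab}=s_as_b+\gamma_{ab}$, which is the definition of $\gamma_{ab}$. The first step is to check that \eqref{eq:T13-decomposition} is an identity. Write $g_a{}^c=-u_au^c+h_a{}^c$ on both indices of $T_{cd}$. Expanding $T_{ab}=g_a{}^cg_b{}^dT_{cd}$ then gives $\rho u_au_b+2q_{(a}u_{b)}+h_a{}^ch_b{}^dT_{cd}$, where the sign of $q$ matches its definition. The fully projected part is split into its $h$-trace $p\,h_{ab}$ and the trace-free remainder $\pi_{ab}$.

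Next, I substitute $h_{ab}=s_as_b+\gamma_{ab}$ in the term $p\,h_{ab}$. For the flux I use \eqref{eq:dec-q}, i.e.\ $q_a=h_a{}^bq_b=(s_as^b+\gamma_a{}^b)q_b=\mathcal Q s_a+\mathcal Q_a$. I also need to confirm consistency with the stated definitions. We have $-T_{ab}u^as^b=-s^bh_b{}^cT_{cd}u^d=q_bs^b$, since $s\in u^\perp$, and similarly $-T_{bc}u^b\gamma^c{}_a=\gamma_a{}^bq_b$ because $\gamma h=\gamma$ by the Lemma of Sec.~\ref{sec:kinematics}. This produces the terms $2\mathcal Q u_{(a}s_{b)}+2u_{(a}\mathcal Q_{b)}$.

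For the spatial part I decompose $\pi_{ab}$ the same way, writing $h_a{}^c=s_as^c+\gamma_a{}^c$ on each index. The result is $\pi_{ab}=(s^cs^d\pi_{cd})s_as_b+2s_{(a}\gamma_{b)}{}^c\pi_{cd}s^d+\gamma_a{}^c\gamma_b{}^d\pi_{cd}$. The tracelessness $h^{cd}\pi_{cd}=0$ gives $\gamma^{cd}\pi_{cd}=-\widetilde\pi$. So the sheet block equals $\widetilde\pi_{ab}-\tfrac12\widetilde\pi\gamma_{ab}$, which is \eqref{eq:pi112-decomposition}. I then check $\widetilde\pi=\tfrac23(p_\parallel-p_\perp)$ by computing $s^as^b\pi_{ab}=p_\parallel-p$ with $p=\tfrac13(p_\parallel+2p_\perp)$, which follows from $h^{ab}T_{ab}=s^as^bT_{ab}+\gamma^{ab}T_{ab}$. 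Also, $\widetilde\pi_a=\gamma_a{}^cT_{cd}s^d$, because the $p\,h$ part drops out under $\gamma_a{}^c s^d h_{cd}=0$.

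The remaining step is to collect the scalar terms. Along $s_as_b$ the coefficient is $p+\widetilde\pi=p_\parallel$. Along $\gamma_{ab}$ it is $p-\tfrac12\widetilde\pi=\tfrac13(p_\parallel+2p_\perp)-\tfrac13(p_\parallel-p_\perp)=p_\perp$. Together these give \eqref{eq:T112-full-decomposition}. I expect the only delicate point to be this trace bookkeeping, namely the sign and normalization linking $\widetilde\pi$, $p_\parallel$ and $p_\perp$ through the tracelessness of $\pi_{ab}$. Everything else is a routine insertion of the resolution of the identity $g_a{}^b=-u_au^b+s_as^b+\gamma_a{}^b$. As an independent check, I would sandwich both sides of \eqref{eq:T112-full-decomposition} with this resolution of the identity and confirm that every projection ($uu$, $us$, $u\gamma$, $ss$, $s\gamma$, $\gamma\gamma$) reproduces its defining quantity.
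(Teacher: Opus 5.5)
Your proposal is correct and follows the same route as the paper: start from the $(1{+}3)$ decomposition, split $p\,h_{ab}$ and $q_a$ using $h_{ab}=s_as_b+\gamma_{ab}$, decompose $\pi_{ab}$ canonically using its tracelessness, and recombine the $s_as_b$ and $\gamma_{ab}$ coefficients into $p_\parallel$ and $p_\perp$. The paper leaves these steps implicit in the text preceding the proposition, and your trace bookkeeping ($\widetilde\pi=p_\parallel-p$ and $p-\tfrac12\widetilde\pi=p_\perp$) supplies them correctly.
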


\section{Spacelike scalar-field gradient}
\label{sec:scalargrad}
%
In this section, we first specialize the analysis of the kinematic quantities introduced in Sec.~\ref{sec:kinematics} to the case of the spacelike direction aligned with the gradient of a scalar field. Then, we perform the imperfect-fluid decomposition, discussed in Sec.~\ref{sec:decomposition}, for Horndeski gravity adapted to the case of a scalar field with a spacelike gradient.

\subsection{Kinematics of the effective scalar-field fluid} \label{sec:scalar}

Let $\phi \in C^3(U, \mathbb{R})$ be a scalar field such that $g^{ab}\nabla_a\phi\nabla_b\phi > 0$. We employ the standard Horndeski convention 
\begin{equation}
  X\defeq-\frac12\nabla_a\phi\nabla^a\phi \, ,
\end{equation}
which for a spacelike scalar-field gradient yields $X<0$. 

Before {\em locking the spatial direction to the scalar gradient}, a general $(1{+}1{+}2)$ split with arbitrary $s$ would give
\begin{equation}
  \nabla_a\phi
  =
  -\dot{\phi} \, u_a
  +\widehat{\phi} \, s_a
  +\gamma_{a}{}^{b}\nabla_b\phi \,.
\end{equation}
Using the spacelike nature of the scalar gradient, we adapt the preferred spatial direction as
\begin{equation}
  s_a=\varepsilon\,\frac{\nabla_a\phi}{\sqrt{-2X}}\,,
  \qquad
  \varepsilon=\pm1\,.\label{eq:scalar_director}
\end{equation}
Borrowing terminology from the continuum description of anisotropic media, we refer to the resulting preferred spatial direction $s$ as the \emph{scalar director}. The sign $\varepsilon$ fixes its orientation. On any connected branch, it is taken to be constant; in the static spherically symmetric sector below, it will be chosen so that $s$ points toward increasing $r$.

Accordingly, by construction,
\begin{equation}
  \nabla_a\phi=\varepsilon\sqrt{-2X}\,s_a\,,
  \qquad
  \widehat{\phi}=s^a\nabla_a\phi=\varepsilon\sqrt{-2X}\,.
  \label{eq:gradphi-spacelike}
\end{equation}

We shall refer to the transformation $s^a\to-s^a$ as \emph{director reversal}. A quantity $F[s]$ is said to be \emph{director-reversal even} or \emph{director-reversal odd} according to whether
\begin{equation*}
F[-s]=+\,F[s]
\qquad\text{or}\qquad
F[-s]=-\,F[s],
\end{equation*}
respectively.

We can now prove the following results:
\begin{prop}
The following identities hold:
\begin{align}
&\nabla_a X=-\widehat{\phi}\,\nabla_a\widehat{\phi}=-\dot{X}u_a+\widehat{X}s_a+2X\,\widehat{s}_a\,, \label{eq:gradX}\\
&\gamma_{a}{}^{b}\nabla_b X=2X\,\widehat{s}_a \, , \label{eq:Xsheet} \\
&\widetilde{\omega} = 0 \, . \label{eq:sheet-vorticity}
\end{align}
\end{prop}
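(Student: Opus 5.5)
The plan is to work directly from the adapted relation \eqref{eq:gradphi-spacelike}, $\nabla_a\phi=\widehat{\phi}\,s_a$ with $\widehat{\phi}=\varepsilon\sqrt{-2X}$, so that $X=-\tfrac12\widehat{\phi}^{\,2}$. Differentiating this identity gives $\nabla_a X=-\widehat{\phi}\,\nabla_a\widehat{\phi}$, which is the first equality in \eqref{eq:gradX}. For the $(1{+}1{+}2)$ form, I would instead compute $\nabla_a X=-\nabla^b\phi\,\nabla_a\nabla_b\phi$ and exploit the symmetry of the Hessian, which holds since $\phi\in C^3$, so $\nabla_a\nabla_b\phi=\nabla_b\nabla_a\phi$. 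This gives
\begin{equation*}
\nabla_a X=-\widehat{\phi}\,s^b\nabla_b\nabla_a\phi=-\widehat{\phi}\,s^b\nabla_b(\widehat{\phi}\,s_a)=-\widehat{\phi}\,(s^b\nabla_b\widehat{\phi})\,s_a-\widehat{\phi}^{\,2}\,s^b\nabla_b s_a .
\end{equation*}
I would then insert \eqref{eq:sds-decomposition}, $s^b\nabla_b s_a=\widehat{s}_a+u_a s^b\chi_b$, and use $-\widehat{\phi}^{\,2}=2X$. The result is
\begin{equation*}
\nabla_a X=(-\widehat{\phi}\,\widehat{\widehat{\phi}})\,s_a+2X\,\widehat{s}_a+2X\,(s^b\chi_b)\,u_a .
\end{equation*}
The coefficient of $s_a$ is $\widehat{X}$ by the chain rule, and contracting with $u^a$ identifies the coefficient of $u_a$ as $-\dot X$. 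Since $\widehat{s}_a$ is orthogonal to $u$ (it is a $D$-derivative) and to $s$ (from $s^a s_a=1$), it is a sheet vector, so this expression is exactly the $(1{+}1{+}2)$ split. This proves \eqref{eq:gradX}, and projecting with $\gamma_a{}^b$ immediately yields \eqref{eq:Xsheet}.

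For \eqref{eq:sheet-vorticity}, I would use the definition $\widetilde{\omega}=\tfrac12\epsilon^{ab}\gamma_b{}^c\nabla_c s_a$ and write $s_a=\widehat{\phi}^{-1}\nabla_a\phi$. Then
\begin{equation*}
\nabla_c s_a=\widehat{\phi}^{-1}\nabla_c\nabla_a\phi-\widehat{\phi}^{-2}(\nabla_c\widehat{\phi})\nabla_a\phi .
\end{equation*}
The second term is proportional to $s_a$, and it vanishes on contraction because $\epsilon^{ab}s_a=0$. The first term is symmetric in $(c,a)$. After projecting with $\gamma$ on both indices (recall that $\epsilon^{ab}$ is already sheet-projected), it is contracted against the antisymmetric $\epsilon^{ab}$ and therefore vanishes. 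Hence $\widetilde{\omega}=0$.

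The only delicate point is bookkeeping: checking that $\widehat{s}_a$ is purely sheet-valued, and that the $u_a$-coefficient is consistent with the definition of $\dot X$. Neither requires more than the orthogonality relations in the first lemma, together with the symmetry of second derivatives of $\phi$, which is guaranteed by the $C^3$ assumption.
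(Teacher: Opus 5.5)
Your proposal is correct and rests on the same key fact as the paper's proof: the symmetry of the Hessian (torsion-freeness of $\nabla$), applied to $\nabla_a\phi=\widehat\phi\,s_a$ and then contracted with $\epsilon^{ab}$ for $\widetilde\omega=0$ and with $s$ for the gradient of $X$. The only difference is organizational. The paper first extracts the sheet projection $\gamma_a{}^c\nabla_c\widehat\phi=\widehat\phi\,\widehat s_a$ and then assembles \eqref{eq:gradX} from the trivial splitting of a covector, whereas you obtain all components of $\nabla_aX$ at once from $\nabla_aX=-\widehat\phi\,s^b\nabla_b\nabla_a\phi$, which also gives $s^b\chi_b=\dot X/\widehat\phi^{\,2}$ as a by-product.
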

\begin{proof}
Using the decomposition~\eqref{eq:ds-decomposition}, we can write the scalar field Hessian in the following form
\begin{align}
  \nabla_b\nabla_a\phi
  &=\nabla_{b}\left(\widehat{\phi}\,s_{a}\right)\nonumber\\
  &=\left(\nabla_{b}\widehat{\phi}\right)s_{a}+\widehat{\phi}\,\nabla_{b}s_{a}\nonumber\\
  &=\left(\nabla_b\widehat{\phi}\right)s_a+\widehat{\phi}\left(-\dot{s}_{a} u_b+\widehat{s}_a s_b+\frac12\widetilde{\Theta}\,\gamma_{ab}+\widetilde{\omega}\,\epsilon_{ab}+\widetilde{\sigma}_{ab}+u_{a}\chi_{b}\right).
  \label{eq:hessian-spacelike}
\end{align}
Since $\nabla$ denotes the Levi-Civita connection, the torsionless property implies $\nabla_{[a}\nabla_{b]}\phi=0$, that projected on the spatial sheet $\mathcal{S}$ yields  
\begin{equation}
0=\gamma_b{}^c\gamma_a{}^d\nabla_{[c}\nabla_{d]}\phi=\widehat{\phi}\,\widetilde{\omega}\,\epsilon_{ab}\quad\Longrightarrow\quad\widetilde{\omega}=0\, ,
\end{equation}
i.e., the {\em spatial twist} vanishes identically.

Now, contracting the torsionless condition with $s^a$ and projecting onto the sheet gives
\begin{align}
0
&=
2\gamma_a{}^c s^d
\nabla_{[c}\nabla_{d]}\phi
\nonumber\\
&=
2\gamma_a{}^c s^d\left[\left(\nabla_{[c}\,\widehat{\phi}\right)s_{d]}+\widehat{\phi}\,\nabla_{[c}s_{d]}\right]
\nonumber\\
&=\gamma_a{}^c\,\nabla_c\widehat{\phi}-\widehat{\phi}\,\gamma_a{}^c s^d\nabla_d s_c
\nonumber\\
&=\gamma_a{}^c\,\nabla_c\widehat{\phi}-\widehat{\phi}\,\widehat{s}_a 
\nonumber\\
& =-\frac{\gamma_a{}^b\,\nabla_b X - 2X\,\widehat{s}_a}{\widehat\phi} \quad\Longrightarrow\quad \gamma_a{}^b\,\nabla_b X=2X\,\widehat{s}_a\,.
\end{align}
Then, Eq.~\eqref{eq:gradX} comes immediately from~\eqref{eq:Xsheet},~\eqref{eq:sheet-vorticity}, the definition $\nabla_a\phi=\widehat{\phi}\,s_a$, and the torsionless property of the connection.
\end{proof}
Evaluating the trace of Eq.~\eqref{eq:hessian-spacelike}, the d'Alembertian of the scalar field turns into
\begin{align}
  \Box\phi
  \defeq \nabla_a\nabla^a\phi=
  \widehat{\phi}\left(\mathcal{A}+\widetilde{\Theta}\right)-\frac{\widehat{X}}{\widehat{\phi}}\,.
  \label{eq:boxphi}
\end{align}
Furthermore, let us consider the relation $u^{a}\nabla_{b}\nabla_{a}\phi=u^{a}\nabla_{a}\nabla_{b}\phi$:
\begin{align}
  &u^{a}\nabla_{b}\nabla_{a}\phi=\widehat{\phi}\left(\mathcal{A}{u}_{b}-\chi_b\right)\,,
  \\ &u^{a}\nabla_{a}\nabla_{b}\phi=\left(u^{a}\nabla_{a}\widehat{\phi}\right)s_{b}+\widehat{\phi}(\mathcal{A}u_b+\alpha_b)\,,
\end{align}
and therefore
\begin{equation}
  \chi_b = \frac{\dot X}{\widehat\phi^2}\,s_b - \alpha_b\,.\label{eq:chi-scalar-locked}
\end{equation}
This allows us to state the following important proposition:
\begin{prop}
In the scalar-locked case, i.e., when the spatial direction $s$ is aligned with the direction of the gradient $\nabla \phi$, $\chi$ is not an independent kinematical object. Its longitudinal part is fixed by the time variation of the kinetic scalar, while its sheet part is fixed by the sheet component of $\dot s$.
\end{prop}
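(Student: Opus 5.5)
The plan is to derive the explicit identity \eqref{eq:chi-scalar-locked} carefully. The proposition is then just its interpretation: it exhibits $\chi_a$ entirely in terms of $\dot X$, $\widehat{\phi}$ and $\alpha_a$. The argument uses only the symmetry of the Hessian of $\phi$, i.e.\ $u^a\nabla_b\nabla_a\phi=u^a\nabla_a\nabla_b\phi$, together with the decomposition \eqref{eq:ds-decomposition} and the splitting \eqref{eq:s-dot-decomposition} of $\dot s$.

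\textbf{Step 1: the left-hand side.} Starting from \eqref{eq:hessian-spacelike}, I will contract with $u^a$. Since $u^as_a=0$, $u^a\gamma_{ab}=0$, $u^a\epsilon_{ab}=0$ and $u^a\widetilde{\sigma}_{ab}=0$, only two terms survive. The term $-\widehat{\phi}\,u_b\,u^a\dot s_a=\widehat{\phi}\mathcal{A}u_b$ remains, using $u^a\dot s_a=-\mathcal{A}$. The term $\widehat{\phi}\,u^au_a\chi_b=-\widehat{\phi}\chi_b$ also remains. Together they give $u^a\nabla_b\nabla_a\phi=\widehat{\phi}(\mathcal{A}u_b-\chi_b)$.

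\textbf{Step 2: the right-hand side.} Here I write $u^a\nabla_a(\widehat{\phi}s_b)=\dot{\widehat{\phi}}\,s_b+\widehat{\phi}\,\dot s_b$ and insert $\dot s_b=\mathcal{A}u_b+\alpha_b$.

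\textbf{Step 3: equate and project.} Equating the two expressions, the $\mathcal{A}u_b$ terms cancel, leaving $-\widehat{\phi}\chi_b=\dot{\widehat{\phi}}\,s_b+\widehat{\phi}\alpha_b$. Since $\widehat{\phi}\neq0$ on the spacelike-gradient region, I can solve for $\chi_b$. To express $\dot{\widehat{\phi}}$ through $X$, I use \eqref{eq:gradX}, $\nabla_aX=-\widehat{\phi}\nabla_a\widehat{\phi}$, which gives $\dot{\widehat{\phi}}=-\dot X/\widehat{\phi}$. This yields $\chi_b=(\dot X/\widehat{\phi}^2)s_b-\alpha_b$. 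Projecting along $s^b$ and with $\gamma_a{}^b$ then separates the result:
\begin{equation*}
s^b\chi_b=\frac{\dot X}{\widehat{\phi}^2}=-\frac{\dot X}{2X}\,,\qquad \gamma_a{}^b\chi_b=-\alpha_a\,,
\end{equation*}
where $\widehat{\phi}^2=-2X$. The sheet projection uses that $\alpha$ is a sheet vector. This is exactly the claimed dependence: the longitudinal part is fixed by the time variation of the kinetic scalar, and the sheet part by the sheet component of $\dot s$.

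The main point requiring care is sign bookkeeping. The sign conventions in \eqref{eq:ds-decomposition}, namely the $-u_b\dot s_a$ and $+u_a\chi_b$ terms, must be consistent with the definition \eqref{eq:chi-definition}. I would check this first by contracting \eqref{eq:ds-decomposition} with $u^a$ and comparing to $h_b{}^c u^a\nabla_c s_a=-h_b{}^c s^a\nabla_c u_a=-\chi_b$. Two further checks are needed. First, $\chi$ has no $u$-component, consistent with $\chi\in u^\perp$. Second, the result is director-reversal covariant, with $\widehat{\phi}\to-\widehat{\phi}$ and $\alpha\to-\alpha$.
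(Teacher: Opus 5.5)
Your proposal is correct and follows the paper's own argument. You impose $u^a\nabla_b\nabla_a\phi=u^a\nabla_a\nabla_b\phi$ on the Hessian decomposition \eqref{eq:hessian-spacelike}, insert $\dot s_b=\mathcal A u_b+\alpha_b$ and $\dot{\widehat\phi}=-\dot X/\widehat\phi$, and arrive at \eqref{eq:chi-scalar-locked}, $\chi_b=(\dot X/\widehat\phi^{\,2})\,s_b-\alpha_b$. (In Step 1 the $\widehat\phi\,\widehat s_a s_b$ term also drops because $u^a\widehat s_a=0$; you use this implicitly.)
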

Using the expression $\chi_b
  =
  \frac13\Theta s_b+\sigma_{bd}s^d-\omega_{bd}s^d$, this also gives equivalent scalar-locked constraints
\begin{equation}
  \frac13\Theta+\sigma_{ab}s^a s^b
  =
  \frac{\dot X}{\widehat\phi^2}\,,
\end{equation}
and
\begin{equation}
  \gamma_b{}^c
  \left(
  \sigma_{cd}s^d-\omega_{cd}s^d
  \right)
  =
  -\alpha_b \,.
\end{equation}

Finally, it is useful to introduce the scalar
\begin{equation}
\Xi\defeq-\ln\left|\frac{\widehat{\phi}}{\phi_\star}\right|\,,
\label{eq:Xi-definition}
\end{equation}
where $\phi_\star$ is a constant introduced to make the logarithm argument dimensionless. Its derivatives along $u^a$ and $s^a$ satisfy
\begin{equation}
\dot{\Xi}
\defeq u^a\nabla_a\Xi
=
\frac{\dot X}{\widehat{\phi}^{\,2}}\,,
\qquad
\widehat{\Xi}
\defeq s^a\nabla_a\Xi
=
\frac{\widehat X}{\widehat{\phi}^{\,2}}\,.
\label{eq:Xi-derivatives}
\end{equation}
Combining Eqs.~\eqref{eq:chi-scalar-locked} and~\eqref{eq:Xi-derivatives}, Eq.~\eqref{eq:sds-decomposition} becomes
\begin{equation}
s^b\nabla_b s_a
=
\widehat{s}_a+\dot{\Xi}\,u_a \, ,
\label{eq:spacetime-versus-spatial-bend}
\end{equation}
for the scalar-locked case. The spacetime curvature vector of the integral curves of $s$, i.e., $s^b\nabla_b s_a$, is the covariant analogue of the curvature vector of a curve, and its spatial projection gives the spatial bend $\widehat{s}$. While the full spacetime vector may contain an additional component along the timelike direction $u$, the two coincide only when $\dot{\Xi}=0$.

\subsection{Horndeski effective stress-energy tensor}
\label{sec:horndeski-effective-stress}

The viable Horndeski action is given by~\cite{Horndeski:1974wa, Giusti:2021sku} 
\begin{equation}
S\left[ g_{ab}, \phi \right] = \frac{1}{16\pi}\int d^4 x \sqrt{-g} \, \left( G_4(\phi)R+G_{2}(\phi,X)-G_{3}(\phi,X)\Box\phi\right) + S^\mathrm{(m)} \,. \label{eq:viable-horndeski-action}
\end{equation}
Throughout this section, we use the shorthand notations
\begin{equation}
  G_{i\phi}\defeq \frac{\partial G_i}{\partial\phi},
  \qquad
  G_{iX}\defeq \frac{\partial G_i}{\partial X},
  \qquad
  G_{4\phi\phi}\defeq \frac{d^2G_4}{d\phi^2}.
\end{equation}

Performing the variation of~\eqref{eq:viable-horndeski-action} with respect to the metric tensor $g^{ab}$ and the scalar field $\phi$, one obtains the respective field equations (see e.g.~\cite{Horndeski:1974wa,Giusti:2021sku,Miranda:2022wkz}),
\begin{align}
G_4  \, G_{ab} -\nabla_{a}\nabla_{b}G_4  
+ \left[ \Box G_4 -\dfrac{G_2  }{2} 
-\dfrac{1}{2} \, \nabla_{c} 
\phi\nabla^{c}G_{3} 
\right] g_{ab}& \nonumber\\
+ \frac{1}{2}\left[ G_{3X} \, 
\Box\phi -G_{2X} \right] 
\nabla_{a}\phi\nabla_{b}\phi 
+ \nabla_{(a}\phi \nabla_{b)}G_{3}&=8\pi T^\mathrm{(m)}_{ab} \,,\label{eq:horndeski-feq}
\end{align}
\begin{align}
 G_{4\phi}  R + G_{2\phi} +G_{2X}  
\Box\phi+\nabla_{c}\phi\nabla^{c}G_{2X} &\nn\\
-G_{3X} (\Box\phi)^2-\nabla_{c}\phi\nabla^c 
G_{3X}  \Box\phi-G_{3X} \nabla^{c} 
\phi\Box\nabla_c\phi&\nn\\
+G_{3X}  R_{ab}\nabla^{a} 
\phi\nabla^{b}\phi-\Box G_{3} -G_{3\phi} \Box\phi&=0 \label{eq:horndeski-eom}\,,
\end{align}
where 
$$T^{\rm (m)}_{ab}\defeq-\frac{2}{\sqrt{-g}}\frac{\delta S^{\rm(m)}}{\delta g^{ab}}$$
denotes the variational definition of the matter stress-energy tensor.

The Horndeski field equation~\eqref{eq:horndeski-feq} can be recast in the form of the Einstein equations,
\begin{equation}\label{eq:effective-einstein-eq}
    G_{ab}=8\pi T^{(\rm eff)}_{ab}\,,
\end{equation}
where
\begin{equation}
    T^{(\rm eff)}_{ab}\defeq\frac{T^{(\rm m)}_{ab}}{G_4}+T^{( \phi)}_{ab}\,, 
\label{eq:T-eff}
\end{equation}
\begin{equation}
    T^{(\phi)}_{ab}\defeq T^{(2)}_{ab}+T^{(3)}_{ab}+T^{(4)}_{ab},
    \label{eq:T-phi}
\end{equation}
and the individual contributions are (see e.g.~\cite{Giusti:2021sku, Miranda:2022wkz})
\begin{align}
    8\pi T^{(2)}_{ab}=\,&\frac{1}{2G_4} \left( G_{2X}\nabla_{a}\phi\nabla_{b}\phi+G_{2}\,g_{ab} \right)\,,\\
    &\nn\\
     8\pi T^{(3)}_{ab}=&\frac{1}{2G_4} \left( G_{3X} \nabla_{c} X  \nabla^{c} \phi  - 2 X G_{3\phi} \right) g_{ab} \nn\\
    \,& - \frac{1}{2G_4}\left( 2 G_{3\phi} + G_{3X} \Box \phi \right) \nabla _a  \phi \nabla _b \phi- \frac{G_{3X}}{G_4} \nabla_{(a} X \nabla_{b)} \phi\,,\\
    &\nn\\
     8\pi T^{(4)}_{ab}=\,&\frac{G_{4\phi}}{G_4}(\nabla_{a}\nabla_{b}\phi-g_{ab}\Box\phi)+\frac{G_{4\phi\phi}}{G_4}(\nabla_{a}\phi\nabla_{b}\phi+2X\,g_{ab})\,.
\end{align}

\subsubsection{Characteristic \texorpdfstring{$(1{+}1{+}2)$}{(1+1+2)} relations}
\label{subsec:characteristic-relations}

We can now specialize the results in Proposition~\ref{prop1} to Horndeski gravity. 
Let $(M,g, \phi)$ be a sufficiently regular solution of the Horndeski field equations~\eqref{eq:horndeski-feq} and~\eqref{eq:horndeski-eom} such that the applicability conditions of the $(1{+}1{+}2)$ split and the scalar-locked kinematics are well defined. Then, using Eqs.~\eqref{eq:T112-full-decomposition},~\eqref{eq:gradphi-spacelike}, and~\eqref{eq:gradX}, the components of the $(1{+}1{+}2)$ imperfect-fluid representation for viable Horndeski gravity, in the scalar-locked case, are provided in the following theorem.

\begin{theorem}
The scalar-locked $(1{+}1{+}2)$ imperfect-fluid decomposition of the effective stress-energy tensor of the Horndeski scalar-field fluid in~\eqref{eq:T-phi} is given by Eq.~\eqref{eq:T112-full-decomposition} with components:
\begin{equation}\label{eq:energy-density-rho}
    8\pi\,\rho^{(\phi)}= 
\widehat{\phi}\,\frac{G_{4\phi}}{G_4}\,\widetilde{\Theta}-\frac{\left(G_{4\phi}-XG_{3X}\right)}{G_4}\frac{\widehat X}{\widehat{\phi}}
-\frac12\frac{\left(G_2+4XG_{4\phi\phi}-2XG_{3\phi}\right)}{G_4}\,,
\end{equation}

\begin{equation}\label{eq:pressure-parallel}
    8\pi\, p_{\parallel}^{(\phi)}=
\frac{
G_2
-2XG_{2X}
+2XG_{3\phi}
}{
2G_4
}-{\widehat{\phi}}
\left(
\mathcal A+\widetilde{\Theta}
\right)
\frac{\left(G_{4\phi}-XG_{3X}\right)}{G_4}\,,
\end{equation}

\begin{equation}\label{eq:pressure-perp}
    8\pi\, p_{\perp}^{(\phi)}=\frac{G_2+4XG_{4\phi\phi}-2XG_{3\phi}}{2G_4}
-\frac{\widehat{\phi}}{2G_4}\left(2\mathcal A+\widetilde{\Theta}\right)G_{4\phi}+\frac{\widehat X}{\widehat{\phi}}\frac{\left(G_{4\phi}-XG_{3X}\right)}{G_4}\,,
\end{equation}

\begin{equation}\label{eq:q-scalar}
    8\pi\, \mathcal{Q}^{(\phi)}=\frac{\dot X}{\widehat\phi}\,\frac{\left(G_{4\phi}-XG_{3X}\right)}{G_4}\,,
\end{equation}

\begin{equation}\label{eq:q-vector}
     8\pi\,\mathcal{Q}_a^{(\phi)}=- \widehat{\phi}\,\frac{G_{4\phi}}{G_{4}}\alpha_{a}\,,
\end{equation}

\begin{equation}\label{eq:pi-vector}
    8\pi\,\widetilde{\pi}_a^{(\phi)}={\widehat\phi}\,\frac{\left(G_{4\phi}-XG_{3X}\right)}{G_4}\widehat{s}_a\,,
\end{equation}

\begin{equation}\label{eq:pi-tensor}
    8\pi\,\widetilde{\pi}_{ab}^{(\phi)}={\widehat\phi}\,\frac {G_{4\phi}}{G_4}\,
\widetilde{\sigma}_{ab}\,.
\end{equation}
\end{theorem}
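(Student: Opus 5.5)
We substitute the scalar-locked kinematics into the three pieces $T^{(2)},T^{(3)},T^{(4)}$ and project with $u^au^b$, $s^as^b$, $\tfrac12\gamma^{ab}$, $-u^as^b$, $-u^b\gamma^c{}_a$, $s^d\gamma_a{}^c$, and the trace-free sheet projector, as in Proposition~\ref{prop1}. Only three tensor building blocks appear, and we decompose each once.

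\emph{Building blocks.} First, $\nabla_a\phi\nabla_b\phi=\widehat\phi^2 s_as_b=-2X\,s_as_b$ by Eq.~\eqref{eq:gradphi-spacelike}. Second, $\nabla_{(a}X\nabla_{b)}\phi=\widehat\phi\,s_{(a}\nabla_{b)}X$, with $\nabla_bX=-\dot Xu_b+\widehat Xs_b+2X\widehat s_b$ from Eq.~\eqref{eq:gradX}. Third, the Hessian $\nabla_a\nabla_b\phi$ from Eq.~\eqref{eq:hessian-spacelike}, where we set $\widetilde\omega=0$ by Eq.~\eqref{eq:sheet-vorticity}. In it we insert
\begin{equation*}
\nabla_b\widehat\phi=-\nabla_bX/\widehat\phi,\qquad \dot s_a=\mathcal A u_a+\alpha_a,
\end{equation*}
and take $\chi_b$ from Eq.~\eqref{eq:chi-scalar-locked}. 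The Hessian is then symmetric, and its projections are
\begin{align*}
&u^au^b\nabla_a\nabla_b\phi=\widehat\phi\,\mathcal A,\qquad
s^as^b\nabla_a\nabla_b\phi=-\widehat X/\widehat\phi,\qquad
\gamma^{ab}\nabla_a\nabla_b\phi=\widehat\phi\,\widetilde\Theta,\\
&u^as^b\nabla_a\nabla_b\phi=-\dot X/\widehat\phi,\qquad
\gamma_c{}^a u^b\nabla_a\nabla_b\phi=\widehat\phi\,\alpha_c,\qquad
\gamma_c{}^a s^b\nabla_a\nabla_b\phi=\widehat\phi\,\widehat s_c,
\end{align*}
and its trace-free sheet part is $\widehat\phi\,\widetilde\sigma_{ab}$. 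The trace $\Box\phi$ is given by Eq.~\eqref{eq:boxphi}, and $\nabla_cX\nabla^c\phi=\widehat\phi\widehat X$. We must check the consistency of the mixed components. For example, the $u$–sheet component can be computed either as $\widehat\phi\alpha$ or as $-\widehat\phi\,\gamma\chi$, and these agree by Eq.~\eqref{eq:chi-scalar-locked}. Likewise the $u$–$s$ components $-\dot X/\widehat\phi$ coincide.

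\emph{Projections.} With these in hand, each component is a short linear combination of terms.

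For $\rho$: $g_{ab}$ contributes $-1$. The terms $s_as_b$ and $s_{(a}\nabla_{b)}X$ drop out. The $T^{(4)}$ Hessian term contributes $\widehat\phi\mathcal A-\Box\cdot(-1)$. The resulting $\widehat\phi\mathcal A$ pieces cancel between $u^au^b\nabla\nabla\phi$ and $+\Box\phi$, which leaves $\widehat\phi\widetilde\Theta-\widehat X/\widehat\phi$ multiplied by $G_{4\phi}/G_4$. The $T^{(3)}$ trace term gives $-(G_{3X}\widehat\phi\widehat X-2XG_{3\phi})/(2G_4)$. Using $\widehat\phi^2=-2X$ we get $\widehat\phi\widehat X=-2X\widehat X/\widehat\phi$, which supplies the $+XG_{3X}\widehat X/\widehat\phi$ combination. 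The $G_{4\phi\phi}$ and $G_2$ terms give the last bracket of Eq.~\eqref{eq:energy-density-rho}.

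For $p_\parallel$: every $s_as_b$ term contributes with weight $1$, and $G_{3X}\Box\phi$ brings in the $\mathcal A+\widetilde\Theta$ pieces. The $\widehat X$ terms must cancel among the $T^{(3)}$ trace term, the $s_{(a}\nabla_{b)}X$ term, the Hessian, and $-\Box\phi$. This is the delicate bookkeeping.

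For $p_\perp$: we take half the sheet trace, so the Hessian gives $\tfrac12\widehat\phi\widetilde\Theta$ and the $-\Box\phi$ term gives the rest.

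For the fluxes and stresses: $\mathcal Q$ receives $\dot X/\widehat\phi$ from both the Hessian and the $G_{3X}$ mixed term, namely $-G_{3X}\widehat\phi(-\dot X)/2\cdot$ combined with the identity $\widehat\phi^2=-2X$. The sheet flux $\mathcal Q_a$ comes only from $T^{(4)}$. $\widetilde\pi_a$ combines the Hessian term $\widehat\phi\widehat s_a G_{4\phi}$ with $-G_{3X}\tfrac12\widehat\phi\,2X\widehat s_a$, giving the factor $G_{4\phi}-XG_{3X}$. $\widetilde\pi_{ab}$ is purely the $T^{(4)}$ Hessian contribution.

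\emph{Main obstacle.} The main difficulty is sign and factor bookkeeping. The relevant quantities are $\varepsilon$, $\widehat\phi^2=-2X$, the $\tfrac12$ from symmetrization in $\nabla_{(a}X\nabla_{b)}\phi$, and the sign conventions $q_a=-h_a{}^cT_{cd}u^d$ and $\mathcal Q=-T_{ab}u^as^b$. To control these, we first verify the Hessian projections carefully, in particular the cancellation of $\mathcal A$ in $\rho$ and its survival as $2\mathcal A$ in $p_\perp$ from $-\Box\phi$. Only after that do we assemble the three parts, grouping coefficients into the combination $G_{4\phi}-XG_{3X}$ wherever $\widehat X$, $\dot X$ or $\widehat s_a$ appears.
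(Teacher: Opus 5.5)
Your plan follows the paper's route. The paper's proof consists of projecting $T^{(2)}+T^{(3)}+T^{(4)}$ with the operators of Proposition~\ref{prop1} after inserting the scalar-locked kinematics, and your three building blocks organize that computation well. Your assemblies of $p_\parallel$, $p_\perp$, $\mathcal Q$, $\mathcal Q_a$, $\widetilde\pi_a$ and $\widetilde\pi_{ab}$ reproduce the stated components, and so do all your listed Hessian projections except one.

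The exception is the projection that feeds $\rho^{(\phi)}$, which has the wrong sign: $u^au^b\nabla_a\nabla_b\phi=-\widehat\phi\,\mathcal A$, not $+\widehat\phi\,\mathcal A$. Since $u^b\nabla_b\phi=0$, one has $u^au^b\nabla_a\nabla_b\phi=-\dot u^b\nabla_b\phi=-\widehat\phi\,s^b\dot u_b=-\widehat\phi\,\mathcal A$. Equivalently, the only $uu$ piece of \eqref{eq:hessian-spacelike} comes from $-\widehat\phi\,u_b\dot s_a$, and $u^a\dot s_a=-\mathcal A$.

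As written, this sign breaks your own argument in two ways. First, your listed projections would trace to $\widehat\phi(\widetilde\Theta-\mathcal A)-\widehat X/\widehat\phi$, which contradicts \eqref{eq:boxphi}. Second, the $T^{(4)}$ contribution to $8\pi\rho^{(\phi)}$, namely $(G_{4\phi}/G_4)\left(u^au^b\nabla_a\nabla_b\phi+\Box\phi\right)$, would leave a spurious $2\widehat\phi\,\mathcal A\,G_{4\phi}/G_4$ instead of the cancellation you assert. With the corrected sign the $\mathcal A$ terms do cancel and \eqref{eq:energy-density-rho} follows. The two pressures are unaffected, because they use only the $ss$ and sheet projections together with $\Box\phi$ taken from \eqref{eq:boxphi}.

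Your description of the bookkeeping for \eqref{eq:pressure-parallel} also omits a term. The $\widehat X$ pieces cancel separately inside $T^{(4)}$ (Hessian against $-\Box\phi$) and inside $T^{(3)}$. Inside $T^{(3)}$ the cancellation needs three contributions, with weights $\tfrac12$, $-1$, $\tfrac12$ in units of $G_{3X}\widehat\phi\widehat X/G_4$:
\begin{itemize}
\item the trace term;
\item the $s_{(a}\nabla_{b)}X$ term;
\item the $-\widehat X/\widehat\phi$ part of $G_{3X}\Box\phi$.
\end{itemize}
So $G_{3X}\Box\phi$ contributes a $\widehat X$ piece as well as the $\mathcal A+\widetilde\Theta$ pieces. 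Dropping it would leave a residual $-G_{3X}\widehat\phi\widehat X/(2G_4)$.
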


\begin{proof}
This result is obtained through the application of the various projections of the stress-energy tensor~\eqref{eq:T-phi} onto $u$, $s$, and $\mathcal{S}$, as detailed in Sec.~\ref{sec:decomposition}, where $s$ is taken as in~\eqref{eq:scalar_director}.
\end{proof}

Now, let us define the following scalar constitutive coefficients:
$$
\mathcal U_\perp
\defeq
\frac{G_2+4XG_{4\phi\phi}-2XG_{3\phi}}{2G_4}\,,
\qquad
\mathcal U_\parallel
\defeq
\frac{G_2-2XG_{2X}+2XG_{3\phi}}{2G_4},
$$
$$
\M
\defeq
\widehat\phi\,\frac{G_{4\phi}}{G_4}\,,
\qquad
\B
\defeq
-\widehat\phi\,\frac{XG_{3X}}{G_4}\,.
$$
Then, the above equations can be written as
\begin{align}
8\pi\,\rho^{(\phi)}
&=
-\mathcal U_\perp
+\M\,\widetilde\Theta
-(\M+\B)\,\widehat{\Xi}\,,\label{eq:energy_density}
\\[0.3em]
8\pi\,p_\parallel^{(\phi)}
&=
\mathcal U_\parallel
-(\M+\B)\,(\mathcal A+\widetilde\Theta)\,,\label{eq:longitudinal_pressure}
\\[0.3em]
8\pi\,p_\perp^{(\phi)}
&=
\mathcal U_\perp
-\M\!\left(\mathcal A+\frac12\widetilde\Theta\right)
+(\M+\B)\,\widehat{\Xi}\,,\label{eq:transverse_pressure}
\\[0.3em]
8\pi\,\mathcal Q^{(\phi)}
&=
(\M + \B)\,\dot{\Xi}\,,\label{eq:scalar_heat_flux}
\\[0.3em]
8\pi\,\mathcal{Q}_a^{(\phi)}
&=
- \M\,\alpha_{a}\,,\label{eq:vect_heat_flux}
\\[0.3mm]
8\pi\,\widetilde\pi_a^{(\phi)}
&=
(\M+\B)\,\widehat s_a\,,\label{eq:bend}
\\[0.3em]
8\pi\,\widetilde\pi_{ab}^{(\phi)}
&=
\M\,\widetilde\sigma_{ab}\,,\label{eq:biaxial_splay}
\end{align}
where $\widehat\Xi=\frac{\widehat X}{\widehat\phi^2}$ and $\dot\Xi=\frac{\dot X}{\widehat\phi^2}$.
The pressure anisotropy is
\begin{align}
8\pi\,\Pi^{(\phi)}
&\defeq
8\pi\left(p_\parallel^{(\phi)}-p_\perp^{(\phi)}\right)
\nn\\
&=
\left(\mathcal U_\parallel-\mathcal U_\perp\right)
-
 \B\,\mathcal A
-
\left(\frac12\M+\B\right)\widetilde\Theta
-
(\M+\B)\,\widehat\Xi \,.
\label{eq:anis}
\end{align}
It is now important to point out that this fluid decomposition is inherently non-unique for the (spacelike) scalar-locked case. Indeed, the time direction $u$ is picked from the Lorentzian subbundle ${\rm Ker} ({\rm d} \phi) \equiv (\nabla \phi)^\perp$, with signature $(-,+,+)$. Hence, any local Lorentz boost onto ${\rm Ker} ({\rm d} \phi)$ yields a new, equally valid, timelike vector that can act as the local 4-velocity of the fluid and remains orthogonal to $\nabla \phi$. In other words, this representation has a local ${\rm SO}^{\uparrow}(1,2)$ gauge symmetry that needs to be broken via kinetic closure, i.e., by choosing a specific physical frame for the decomposition.

Another important observation is that the ``constitutive'' equations~\eqref{eq:energy_density}--\eqref{eq:biaxial_splay} do not allow for a natural general fluid analogy as in the case of a timelike gradient (see~\cite{Giusti:2021sku}). Furthermore, the (would-be Cauchy) anisotropic stress tensor~\eqref{eq:bend}--\eqref{eq:biaxial_splay} no longer matches that of a Newtonian fluid, in general, nor is the heat-flux density~\eqref{eq:scalar_heat_flux}--\eqref{eq:vect_heat_flux} Eckart-like. 

Note that \eqref{eq:energy_density}--\eqref{eq:anis} confirm, and extend to a more general class of theories, some of the main conclusions of~\cite{Gergely:2026drx} in the spacelike scalar-field gradient case.

\section{Observer-dependent approach}
\label{sec:observer-dependent}

The characteristic relations~\eqref{eq:energy_density}--\eqref{eq:biaxial_splay} provide an observer-dependent description of the effective medium, which we often refer to as $\phi$-fluid, once a unit timelike vector field $u$ orthogonal to $\nabla\phi$ has been selected. However, this choice is not fixed by the scalar field itself because of the non-uniqueness of the fluid decomposition pointed out at the end of Sec.~\ref{subsec:characteristic-relations} for the scalar-locked case.

The following constitutive comparisons must therefore be understood relative to a specified timelike frame. We first examine the possible matching of the effective energy-flux channels with an anisotropic Eckart-like heat-flux law and consider the response of the vector and sheet deviatoric stresses to gradients of the scalar-locked preferred direction. Second, we highlight the common features between the constitutive structure of the effective scalar-field fluid for viable Horndeski gravity and the general theory of anisotropic media with a preferred spatial direction.

It is important to point out the limitations of the following analogies. The first, and main, limitation is the fact that this description is observer-dependent (indeed, we are required to pick a unit timelike vector field $u$ in ${\rm Ker} \, ({\rm d} \phi)$). Second, concerning more specifically the analogy with nematic fluids, in standard theory the fluid velocity and the director (i.e., the spatial unit vector field associated with the preferred direction) are independent material variables, whereas in our scalar-locked model $s$ is determined by the normalization of the scalar-field gradient and therefore it is not an independent degree of freedom. Hence, the map between the stress part of the effective scalar-field fluid for viable Horndeski gravity and the theory of nematic fluids is not one-to-one, yet this is not unexpected.
\subsection{Matching Eckart heat flux}
\label{subsec:eckart-matching}
The energy-flux channels of the scalar-locked effective fluid can be compared with relativistic heat conduction in a medium with a preferred spatial direction. Tentatively, to remain coherent with the original proposal of the thermodynamics of scalar-tensor gravity~\cite{Faraoni:2021lfc, Faraoni:2021jri}, one could investigate whether these channels admit an anisotropic Eckart-type heat-flux representation, once a timelike direction is selected for the fluid decomposition. To this end, we begin by recalling Eckart's law for an isotropic relativistic medium, namely,
\begin{equation}
q_a
=
-\kappa\,h_a{}^{b}
\left(
\nabla_b\T+\T\,\dot u_b
\right),
\label{eq:Eckart-heat-flux}
\end{equation}
where $\T$ is the local temperature and $\kappa\geq0$ is the thermal conductivity~\cite{Eckart:1940te}. As is apparent, Eckart's heat flux modifies Fourier's law by including an inertial contribution due to the fluid's acceleration. 

The general $(1{+}1{+}2)$ decomposition of the heat flux is given in~\eqref{eq:dec-q}, which, once applied to~\eqref{eq:Eckart-heat-flux}, yields longitudinal and sheet components of the Eckart flux. In detail, recalling that $\dot u_a=\mathcal As_a+\beta_a$, one finds
\begin{align}
\mathcal Q
&=
-\kappa
\left(
\widehat\T+\T\mathcal A
\right),
\label{eq:isotropic-Eckart-longitudinal-flux}
\\
\mathcal Q_a
&=
-\kappa
\left(
\gamma_a{}^b\nabla_b\T+\T\beta_a
\right) \, ,
\label{eq:isotropic-Eckart-sheet-flux}
\end{align}
for the Eckart flux in~\eqref{eq:Eckart-heat-flux}. Clearly, in the decomposition of an isotropic Eckart flux, the longitudinal and sheet components feature the same conductivity. The situation is notoriously different if one considers anisotropic media (see e.g.~\cite{Liu:2002Continuum} for a textbook introduction on the mathematical description of anisotropic materials).

In the scalar-locked case, it is reasonable to replace the conductivity coefficient with a conductivity tensor that yields different conductivity coefficients along the longitudinal direction and on the sheet. In other words, we can perform the replacement
\begin{equation}
\kappa h_a{}^b
\longrightarrow
\kappa_\parallel s_as^b
+
\kappa_\perp\gamma_a{}^b\,,
\qquad
\kappa_\parallel,\kappa_\perp\geq0\,.
\label{eq:uniaxial-conductivity-tensor}
\end{equation} 
into~\eqref{eq:Eckart-heat-flux}, ``anisotropizing'' the heat flux and obtaining an alternative law that reads 
\begin{equation}
q_a
=
-
\left(
\kappa_\parallel s_as^b
+
\kappa_\perp\gamma_a{}^b
\right)
\left(
\nabla_b\T+\T\dot u_b
\right).
\label{eq:uniaxial-fourier-eckart}
\end{equation}
In other words, since in the scalar-locked case there exists a preferred spatial direction $s$ given by the normalized scalar gradient, it is reasonable to replace the coefficient $\kappa$ with a uniaxial conductivity tensor ${\rm K}$ that reads
$$
{\rm K} \defeq  \kappa_\parallel \, s^\flat \otimes s^\flat
+
\kappa_\perp\gamma \, ,
$$
defined on $u^\perp$. Thermal conductivity tensors of this form are the simplest ones compatible with uniaxial media. Such anisotropic heat-flux laws are studied, for instance, in the context of uniaxial nematic media~\cite{Feireisl:2011, Feireisl:2012} and have also been discussed in the context of relativistic fluids with anisotropic heat conduction~\cite{Oettinger:1998}. It is worth observing that the thermal conductivity tensor ${\rm K}$ defined above, representing the replacement~\eqref{eq:uniaxial-fourier-eckart}, is director-reversal even by construction. This makes the heat flux in~\eqref{eq:uniaxial-fourier-eckart} inherently director-reversal even. We restrict our attention to this scenario. 

Now, projecting the anisotropic Eckart-like heat flux~\eqref{eq:uniaxial-fourier-eckart} along the longitudinal and sheet directions yields
\begin{align}
\mathcal Q
&=
-\kappa_\parallel
\left(
\widehat\T+\T\mathcal A
\right),
\label{eq:eckart-longitudinal-flux-general}
\\
\mathcal Q_a
&=
-\kappa_\perp
\left(
\gamma_a{}^b\nabla_b\T+\T\beta_a
\right).
\label{eq:eckart-sheet-flux-general}
\end{align}
Then, comparing these expressions with the characteristic relations of the Horndeski scalar-field fluid, specifically with Eqs.~\eqref{eq:scalar_heat_flux} and~\eqref{eq:vect_heat_flux}, yields
\begin{align}
(\M+\B)\,\dot\Xi
&=
-8\pi\kappa_\parallel
\left(
\widehat\T+\T\mathcal A
\right),
\label{eq:Horndeski-longitudinal-Eckart-matching-general}
\\
-\M\alpha_a
&=
-8\pi\kappa_\perp
\left(
\gamma_a{}^b\nabla_b\T+\T\beta_a
\right) \, .
\label{eq:Horndeski-sheet-Eckart-matching-general}
\end{align}
In other words, these latter equations determine the conditions on ${\rm K}$ and $\T$ that allow us to describe the effective heat flux of the Horndeski $\phi$-fluid in terms of a uniaxial Eckart-like heat transport.

Again, in analogy with previous literature~\cite{Faraoni:2021lfc, Faraoni:2021jri, Miranda:2022wkz}, we assume that the candidate effective temperature is a local scalar constructed from the scalar field and its first derivative, i.e.,
\begin{equation}
\T^{(\phi)}=\T^{(\phi)}(\phi,X)\,.
\label{eq:effective-temperature-phi-X}
\end{equation} 
Then, taking advantage of Eqs.~\eqref{eq:gradphi-spacelike} and~\eqref{eq:gradX}, the longitudinal and sheet temperature gradients read
\begin{align}
\widehat{\T}^{(\phi)}
&=
\widehat\phi\,\partial_{\phi}\T^{(\phi)}
+
\widehat X\,\partial_{X}\T^{(\phi)}\,,
\label{eq:temperature-longitudinal-gradient}
\\
\gamma_a{}^b\nabla_b\T^{(\phi)}
&=
2X\,\partial_{X}\T^{(\phi)}\,\widehat s_a\, ,
\label{eq:temperature-sheet-gradient}
\end{align}
and hence~\eqref{eq:Horndeski-longitudinal-Eckart-matching-general} and~\eqref{eq:Horndeski-sheet-Eckart-matching-general} can be rewritten as
\begin{align}
(\M+\B)\dot\Xi
&=
-8\pi\kappa_\parallel\T^{(\phi)}
\left[
\mathcal A
+
\widehat\phi\,\partial_\phi\ln\T^{(\phi)}
+
\widehat X\,\partial_X\ln\T^{(\phi)}
\right],
\label{eq:Horndeski-longitudinal-Eckart-matching-phi-X}
\\
-\M\alpha_a
&=
-8\pi\kappa_\perp\T^{(\phi)}
\left[
\beta_a
+
2X\,\partial_X\ln\T^{(\phi)}\,\widehat s_a
\right].
\label{eq:Horndeski-sheet-Eckart-matching-phi-X}
\end{align}
The dependence of the sheet force on $\widehat s_a$ is a direct consequence of the identity $\gamma_a{}^b\nabla_bX=2X\widehat s_a$, which holds for the scalar-locked case.

\begin{remark}
The subclass of our ansatz with $\T^{(\phi)}=\T^{(\phi)}(\phi)$ removes the explicit bend contribution from the sheet temperature gradient (i.e., the term proportional to $\widehat{s}$). The matching conditions then reduce to
\begin{align}
(\M+\B)\dot\Xi
&=
-8\pi\kappa_\parallel\T^{(\phi)}
\left(
\mathcal A
+
\widehat\phi\,\frac{\mathrm d\ln\T^{(\phi)}}{\mathrm d\phi}
\right),
\nonumber
\\
\M\alpha_a
&=
8\pi\kappa_\perp\T^{(\phi)}\beta_a\,.
\nonumber
\end{align}
Hence, unlike the timelike gradient case~\cite{Giusti:2021sku}, the longitudinal spatial gradient of a temperature that depends only on $\phi$ is still present since $\nabla\phi$ is aligned with $s$ rather than with $u$. $\triangleleft$
\end{remark}

\subsection{Anisotropic stress response}
\label{subsec:gradient-response}
It is now interesting to expand upon the analysis of the anisotropic stresses for the effective fluid via the standard methods of continuum mechanics for anisotropic fluids~\cite{Ericksen:1959, Ericksen:1960, Leslie:1966}. Specifically, we note that the vector~\eqref{eq:bend} and sheet trace-free tensor~\eqref{eq:biaxial_splay} characteristic relations, and more generally the would-be Cauchy stress tensor for the effective fluid, fit within the general framework for anisotropic media with a preferred direction (typically referred to as ``director'', and identified with our scalar-locked direction $s$). 

Tentatively, it would be possible to draw an analogy between the scalar-locked Horndeski effective fluid and some features of the Oseen--Frank theory for liquid crystals (see e.g.~\cite{Stewart:2004}), which corresponds to the continuum orientational elastic subclass of the Ericksen--Leslie theory for nematic fluids. However, the Oseen--Frank theory relies upon a free-energy density functional of the director field and its spatial gradients, which loses its natural physical meaning for our Horndeski effective fluid. Without dwelling on this point, we just point out that for the scalar-locked viable Horndeski fluid, the nematic-fluid-like relations read
\begin{equation}
\widetilde{\pi}^{(\phi)}_a
=
k_{\mathrm b}^{(\phi)}\widehat{s}_a\,,
\qquad
\widetilde{\pi}^{(\phi)}_{ab}
=
k_{\sigma}^{(\phi)}\widetilde{\sigma}_{ab}\,,
\label{eq:Horndeski-director-response}
\end{equation}
where
\begin{equation}
k_{\mathrm b}^{(\phi)}
\defeq
\frac{\M+\B}{8\pi}\,,
\qquad
k_{\sigma}^{(\phi)}
\defeq
\frac{\M}{8\pi}\, ,
\label{eq:director-response-coefficients}
\end{equation}
denote the effective \emph{director-gradient stress-response coefficients}.
\section{Observer-independent scalar-layer representation}
\label{sec:scalar-layer-representation}

Let $U \subseteq M$ be an open region of our spacetime in which $X<0$. Since $\nabla\phi$ is then everywhere spacelike, the level sets
\begin{equation}
\Sigma_{\phi_0}
\defeq
\left\{
x\in U\,\middle|\,\phi(x)=\phi_0
\right\}
\end{equation}
form a local foliation of $U$ by timelike hypersurfaces. This codimension-one scalar foliation also provides the geometrical starting point of the unitary-gauge effective-field-theory description of black-hole backgrounds with spacelike scalar profiles developed in~\cite{Mukohyama:2025jzk}. 
Our construction shares this geometrical starting point but uses it for a different purpose, namely to reorganize the full effective stress-energy tensor of viable Horndeski gravity in terms of the intrinsic and extrinsic geometry of these {\em scalar leaves} $\Sigma_{\phi_0}$, without selecting a particular timelike vector field within 
${\rm Ker}(\mathrm d\phi)$. The result is an observer-independent layered-media description of the effective stress-energy tensor.
\subsection{Geometry of the scalar leaves}
\label{subsec:scalar-foliation-geometry}

Let $\Sigma$ denote a generic scalar leaf. Then, the induced Lorentzian metric on each leaf is defined as
\begin{equation}
\ell_{ab} \defeq g_{ab}-s_a s_b = -u_a u_b+\gamma_{ab}\,, \qquad \ell_a{}^b s_b=0\,,
\label{eq:scalar-leaf-projector}
\end{equation}
which is independent of the particular unit timelike vector field $u\in {\rm Ker} \, (\mathrm d\phi)$ used to perform a $(1{+}1{+}2)$ decomposition, whereas the second equality in the definition of $\ell$ is given by the $(1{+}1{+}2)$ decomposition once such a timelike direction is selected. 

The {\em extrinsic curvature} of a scalar leaf, with the sign convention fixed by Eq.~\eqref{eq:scalar_director}, reads
\begin{equation}
\K_{ab} \defeq \ell_a{}^c\ell_b{}^d\nabla_c s_d\,,
\qquad
\K\defeq\ell^{ab}\K_{ab}\, ,
\label{eq:scalar-leaf-extrinsic-curvature}
\end{equation}
where $\K_{ab}$ denotes the second fundamental form of the scalar leaves, while $\K$ denotes its trace, measuring the variation of the unit normal $s$ to $\Sigma$ along directions tangent to the leaf $\Sigma$ (see e.g.~\cite{doCarmo:1992}). Equivalently, one has that $\mathcal L_s\ell_{ab}=2\K_{ab}$, where ${\cal L}_s$ denotes the Lie derivative along $s$.

Since $s \propto \nabla \phi$, the foliation is hypersurface orthogonal; furthermore, since $s$ is the unit normal to each leaf, the second fundamental form $\K_{ab}$ is symmetric. Indeed, it is easy to see that
$$
\K_{[ab]}= \ell_a{}^c\ell_b{}^d\nabla_{[c}s_{d]} = \ell_a{}^c\ell_b{}^d \nabla_{[c} \left(\frac{\varepsilon}{\sqrt{-2X}} \right)\,\nabla_{d]}\phi = 0 \, .
$$

We can now relate $\K_{ab}$ with the variables introduced in the $(1{+}1{+}2)$ split:
\begin{prop}
The $(1{+}1{+}2)$ decomposition of the second fundamental form of a scalar leaf $\Sigma$ reads
\begin{equation}
\K_{ab} = -\mathcal A u_a u_b -2u_{(a}\alpha_{b)} +\frac12\widetilde\Theta\,\gamma_{ab} +\widetilde\sigma_{ab} \, ,\label{eq:extrinsic-curvature}
\end{equation}
while its trace (essentially the mean curvature of the scalar leaf) is given by
\begin{equation}
\K=\mathcal A+\widetilde\Theta\,.
\label{eq:mean-curvature}
\end{equation}
\end{prop}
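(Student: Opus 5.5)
The plan is to substitute the full decomposition \eqref{eq:ds-decomposition} of $\nabla_b s_a$ into the definition \eqref{eq:scalar-leaf-extrinsic-curvature}, using $\ell_a{}^b=-u_au^b+\gamma_a{}^b$. Once this projector is written in $(1{+}1{+}2)$ form, its action on each basis covector is immediate: $\ell_a{}^b u_b=u_a$, $\ell_a{}^b s_b=0$, and $\ell_a{}^b\gamma_b{}^c=\gamma_a{}^c$. Consequently $\ell$ acts as the identity on every term built from $u$ and sheet tensors, and it annihilates every term carrying an $s$ index. The key steps, term by term, are as follows.

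\textbf{Contribution of $-u_b\dot s_a$.} Insert $\dot s_a=\mathcal A u_a+\alpha_a$ from \eqref{eq:s-dot-decomposition}. Projecting both indices with $\ell$ gives $-\mathcal A u_au_b-u_b\alpha_a$, since $\alpha$ is a sheet vector.

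\textbf{Terms that vanish or survive unchanged.} The term $s_b\widehat s_a$ is annihilated by $\ell_b{}^d$. The terms $\frac12\widetilde\Theta\gamma_{ab}$ and $\widetilde\sigma_{ab}$ survive unchanged. The twist term drops out because $\widetilde\omega=0$ by \eqref{eq:sheet-vorticity}.

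\textbf{Contribution of $u_a\chi_b$.} The projection of $\chi_b$ onto the leaf removes its $s$-component. Using the scalar-locked identity \eqref{eq:chi-scalar-locked}, $\chi_b=\dot\Xi\,s_b-\alpha_b$, this term therefore contributes $-u_a\alpha_b$.

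Adding the two $\alpha$-terms gives $-u_b\alpha_a-u_a\alpha_b=-2u_{(a}\alpha_{b)}$, which yields \eqref{eq:extrinsic-curvature}. This also reproduces the symmetry of $\K_{ab}$ proved just before the statement, and it serves as a check on the index ordering: with $\nabla_c s_d$ contracted as in the definition, the first index is the derivative slot. The step that requires care is exactly this matching of index positions against the convention of \eqref{eq:ds-decomposition}, where $\nabla_b s_a$ carries the derivative on $b$. For that reason I would carry it out explicitly. The point at which the scalar locking is genuinely used is the replacement of $\gamma$-projected $\chi$ by $-\alpha$; without \eqref{eq:chi-scalar-locked}, the mixed term would not be symmetric.

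For the trace, contract with $\ell^{ab}=-u^au^b+\gamma^{ab}$. The term $-\mathcal Au_au_b$ gives $-\mathcal A\,(u^au^b u_au_b)(-1)=\mathcal A$. The mixed term vanishes because $u^a\alpha_a=0$ and $\gamma^{ab}u_a=0$. Finally, $\gamma^{ab}\gamma_{ab}=2$ turns $\frac12\widetilde\Theta\gamma_{ab}$ into $\widetilde\Theta$, while $\widetilde\sigma_{ab}$ is trace-free. Hence $\K=\mathcal A+\widetilde\Theta$, consistent with \eqref{eq:boxphi}.
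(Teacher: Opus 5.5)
Your proposal is correct and follows the same route as the paper's proof. The paper likewise double-projects \eqref{eq:ds-decomposition} with $\ell$, uses $\widetilde\omega=0$, and uses \eqref{eq:chi-scalar-locked} to turn the leaf part of $\chi$ into $-\alpha$, then takes the trace; you carry out term by term, with the index-ordering check, what the paper states in a single line.
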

\begin{proof}
It follows directly from the scalar-locked kinematics. Indeed, taking advantage of 
Eqs.~\eqref{eq:ds-decomposition} and~\eqref{eq:chi-scalar-locked}, together with $\widetilde\omega=0$, the double tangential projection of $\nabla_b s_a$ gives exactly~\eqref{eq:extrinsic-curvature}. Then taking the trace of~\eqref{eq:extrinsic-curvature} immediately yields~\eqref{eq:mean-curvature}.
\end{proof}

We can now provide a geometrical interpretation for the formal quantity $\Xi \defeq - \ln|\widehat{\phi}/\phi_\star|$ defined in Sec.~\ref{sec:scalar}. Let $\lambda$ denote the parameter along the integral curve tangent to $s$ starting on the leaf $\Sigma$. Then,
$$
\frac{{\rm d} \phi}{{\rm d} \lambda} = s^a \nabla_a \phi = \widehat{\phi} \, ,
$$
denotes the variation of $\phi$ along this curve. Hence, a neighboring leaf $\Sigma _{\phi_0 + {\rm d} \phi}$ is located at a parameter distance ${\rm d} \lambda = {\rm d} \phi / \widehat{\phi}$ from $\Sigma_{\phi_0}$. Hence ${\rm d} \lambda = {\rm sign} (\widehat{\phi} ) \,{\rm d} \phi / |\widehat{\phi}| = {\rm sign} (\widehat{\phi} ) \,{\rm e}^{\Xi} \,{\rm d} \phi / |\phi_\star|$. Therefore, $\Xi$ is related to the increment of the parameter along the curve, i.e., the local leaf spacing in terms of $\lambda$, for an infinitesimal increment of the scalar field. Note that, given a local coordinate chart $\{x^\mu\}$, the integral curve of $s$ satisfies ${\rm d}x^{\mu}/{\rm d} \lambda = s^\mu$ and $s^\mu s_\mu =1$, hence ${\rm d} s ^2 = g_{\mu\nu} (x^{\alpha}(\lambda)) \, {\rm d} x^\mu (\lambda) \, {\rm d} x^\nu (\lambda) = {\rm d} \lambda ^2$, namely $\lambda$ denotes the proper distance along the integral curve. 
Therefore, given two infinitesimally close leaves $\Sigma _{\phi_0}$ and $\Sigma _{\phi_0 + {\rm d} \phi}$ in the foliation, then ${\rm e}^{\Xi}$ essentially measures the proper distance between the two leaves. In a similar manner, we have that $\widehat\Xi$ measures the variation of this spacing across the foliation.
\subsection{Decomposition of the stress-energy tensor}
\label{subsec:foliation-adapted-stress}
We can now decompose the effective stress-energy tensor taking advantage of this scalar-layer representation, i.e.,
\begin{equation}
T^{(\phi)}_{ab}
=
 p_{s}\,s_a s_b
+2s_{(a}j_{b)}
+\tau_{ab}\,,
\label{eq:leaf-stress-decomposition}
\end{equation}
where
\begin{equation}
 p_{s}
\defeq
T^{(\phi)}_{ab}s^a s^b\,,
\qquad
j_a
\defeq
\ell_a{}^cT^{(\phi)}_{cd}s^d\,,
\qquad
\tau_{ab}
\defeq
\ell_a{}^c\ell_b{}^dT^{(\phi)}_{cd}\,.
\label{eq:leaf-stress-definitions}
\end{equation}
In this decomposition, $ p_{s}$ is the pressure normal to the leaves, $j \in \Gamma (T^\ast \Sigma)$ is the leaf-tangential traction (satisfying $j_a \, s^a=0$), and $\tau \in \Gamma (T^\ast \Sigma \otimes T^\ast \Sigma)$ is the intrinsic scalar-leaf stress-energy tensor (satisfying $\tau_{ab} \, s^b=0$). These quantities can be easily expressed in terms of the $(1{+}1{+}2)$ variables as
\begin{align}
 p_{s}
&=p^{(\phi)}_{\parallel}\,,\label{eq:normal_stress}
\\
j_a&=\mathcal 
Q^{(\phi)}u_a+\widetilde\pi^{(\phi)}_a\,,\label{eq:normal_tangential_projection}\\
\tau_{ab}&=\rho^{(\phi)}u_a u_b+p^{(\phi)}_{\perp}\gamma_{ab}+2u_{(a}\mathcal Q^{(\phi)}_{b)}+\widetilde\pi^{(\phi)}_{ab}\,.
\label{eq:leaf-intrinsic_stress-energy_tensor}
\end{align}

We then have that
\begin{theorem}
For the effective stress-energy tensor of viable Horndeski gravity, the following relations hold
\begin{align}
8\pi\tau_{ab} &= \M \left(\K_{ab}-\K\ell_{ab}\right) + \mathcal P_{\ell}\,\ell_{ab}\,, \label{eq:leaf-intrinsic-stress} \\
\mathcal P_{\ell}
&\defeq
\mathcal U_{\perp}
+
(\M+\B)\widehat\Xi\, ,
\label{eq:leaf-isotropic-coefficient}\\
8\pi p_{s} &= \mathcal U_{\parallel} - (\M+\B)\K\, ,\label{eq:leaf-normal-pressure} \\
8\pi j_a &= (\M+\B) \left( \dot\Xi\,u_a+\widehat s_a \right)= (\M+\B) \,s^{b}\nabla_b s_a\,,
\label{eq:leaf-tangent-traction-112}
\end{align}
in the scalar-layer representation.
\end{theorem}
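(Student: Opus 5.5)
The plan is to work entirely from the $(1{+}1{+}2)$ expressions already established: the projections \eqref{eq:normal_stress}--\eqref{eq:leaf-intrinsic_stress-energy_tensor} of $T^{(\phi)}_{ab}$ onto $s$ and onto the leaves, combined with the characteristic relations \eqref{eq:energy_density}--\eqref{eq:biaxial_splay} and the decomposition \eqref{eq:extrinsic-curvature}--\eqref{eq:mean-curvature} of $\K_{ab}$. Each of the three pieces $p_s$, $j_a$, $\tau_{ab}$ is treated separately. Because the right-hand sides of the claimed relations are manifestly independent of $u$, the computation can be done with an arbitrary $u\in{\rm Ker}(\mathrm d\phi)$, and the result then certifies observer independence automatically.

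The normal pressure and the traction are immediate. By \eqref{eq:normal_stress} and \eqref{eq:longitudinal_pressure}, $8\pi p_s=\mathcal U_\parallel-(\M+\B)(\mathcal A+\widetilde\Theta)$, and \eqref{eq:mean-curvature} replaces $\mathcal A+\widetilde\Theta$ with $\K$. For $j_a$, inserting \eqref{eq:scalar_heat_flux} and \eqref{eq:bend} into \eqref{eq:normal_tangential_projection} gives $(\M+\B)(\dot\Xi u_a+\widehat s_a)$. The second form then follows from \eqref{eq:spacetime-versus-spatial-bend}, which states $s^b\nabla_b s_a=\widehat s_a+\dot\Xi u_a$. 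Note that this relation relied on the scalar-locked expression \eqref{eq:chi-scalar-locked} for $\chi$.

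The main step is $\tau_{ab}$, which I would handle by matching coefficients of the independent structures $u_au_b$, $\gamma_{ab}$, $u_{(a}\alpha_{b)}$ and $\widetilde\sigma_{ab}$. From \eqref{eq:extrinsic-curvature}, \eqref{eq:mean-curvature} and $\ell_{ab}=-u_au_b+\gamma_{ab}$ we get
\begin{equation*}
\K_{ab}-\K\ell_{ab}=\widetilde\Theta\,u_au_b-2u_{(a}\alpha_{b)}-\left(\mathcal A+\tfrac12\widetilde\Theta\right)\gamma_{ab}+\widetilde\sigma_{ab}\,.
\end{equation*}
Multiplying by $\M$ and adding $\mathcal P_\ell\ell_{ab}=-\mathcal P_\ell u_au_b+\mathcal P_\ell\gamma_{ab}$, we compare term by term with $8\pi$ times \eqref{eq:leaf-intrinsic_stress-energy_tensor}:
\begin{itemize}
\item the $u_au_b$ coefficient must equal $8\pi\rho^{(\phi)}=-\mathcal U_\perp+\M\widetilde\Theta-(\M+\B)\widehat\Xi$, which holds exactly with $\mathcal P_\ell=\mathcal U_\perp+(\M+\B)\widehat\Xi$;
\item the $\gamma_{ab}$ coefficient must equal $8\pi p_\perp^{(\phi)}$ in \eqref{eq:transverse_pressure}, which also holds with the same $\mathcal P_\ell$;
\item the mixed term $2u_{(a}\mathcal Q_{b)}$ with $8\pi\mathcal Q_b=-\M\alpha_b$ reproduces $-2\M u_{(a}\alpha_{b)}$;
\item the sheet shear gives $\M\widetilde\sigma_{ab}$, in agreement with \eqref{eq:biaxial_splay}.
\end{itemize}

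The only real obstacle is bookkeeping. The signs of the $-\mathcal A u_au_b$ and $-2u_{(a}\alpha_{b)}$ terms in $\K_{ab}$, and the sign of $u_au_b$ inside $\ell_{ab}$, must be tracked carefully. The nontrivial point is that a single coefficient $\mathcal P_\ell$ simultaneously fixes both the energy density and the transverse pressure; this consistency reflects the fact that $\rho^{(\phi)}+p_\perp^{(\phi)}$ combines into $-\M(\mathcal A+\frac12\widetilde\Theta)+\M\widetilde\Theta$-type terms, which is what the trace-adjusted combination $\K_{ab}-\K\ell_{ab}$ produces.
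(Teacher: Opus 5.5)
Your proposal is correct and follows essentially the same route as the paper. You write the trace-adjusted extrinsic curvature $\K_{ab}-\K\ell_{ab}$ in $(1{+}1{+}2)$ form, substitute the characteristic relations into the projections \eqref{eq:normal_stress}--\eqref{eq:leaf-intrinsic_stress-energy_tensor}, and obtain the second form of $j_a$ from \eqref{eq:spacetime-versus-spatial-bend}. Your coefficient-by-coefficient check, including that a single $\mathcal P_\ell$ fits both the $u_au_b$ and $\gamma_{ab}$ slots, simply makes explicit the ``simple algebraic manipulations'' the paper leaves implicit.
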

\begin{proof}
First, let us observe that 
\begin{equation}
\K_{ab}-\K\ell_{ab}
=
\widetilde\Theta\,u_a u_b -2u_{(a}\alpha_{b)} -\left(\mathcal  A+\frac12\widetilde\Theta\right)\gamma_{ab} +\widetilde\sigma_{ab}\,. \label{eq:trace-adjusted-extrinsic-curvature}
\end{equation}
Substituting the characteristic Horndeski relations~\eqref{eq:energy_density}--\eqref{eq:biaxial_splay} into~\eqref{eq:normal_stress} and~\eqref{eq:leaf-intrinsic_stress-energy_tensor} directly yields \eqref{eq:leaf-intrinsic-stress} and~\eqref{eq:leaf-normal-pressure} following some simple algebraic manipulations. Similarly, from the characteristic equations~\eqref{eq:scalar_heat_flux} and~\eqref{eq:bend}, the mixed projection in~\eqref{eq:normal_tangential_projection} yields the first equality in~\eqref{eq:leaf-tangent-traction-112}, while the last equality in~\eqref{eq:leaf-tangent-traction-112} follows from~\eqref{eq:spacetime-versus-spatial-bend}.
\end{proof}

Equations~\eqref{eq:leaf-intrinsic-stress} and~\eqref{eq:leaf-normal-pressure} show that $\M$ controls the contribution of the extrinsic curvature of the leaf to the intrinsic leaf stress, whereas $\M+\B$ gauges the response of the normal pressure to the mean curvature.

Let $\mathcal{D}$ be the Levi--Civita connection on $(\Sigma, \ell)$. Then, for the scalar $\Xi$, since $X=-\widehat\phi^{\,2}/2$, we have that
\begin{align}
\mathcal D_a\Xi &\defeq\ell_a{}^b\nabla_b\Xi = -u_a\dot\Xi+\gamma_a{}^b\nabla_b\Xi
\nonumber\\
& = -u_a\dot\Xi-\frac{1}{2X}\gamma_{a}{}^b\nabla_b X
\nonumber\\
& = -u_a\dot\Xi-\widehat s_a\,,
\end{align}
where we took advantage of the identity~\eqref{eq:Xsheet}. Then, from Eq.~\eqref{eq:leaf-tangent-traction-112} we have that
\begin{equation}
8\pi j_a = -(\M+\B)\mathcal D_a\Xi\,.
\label{eq:leaf-tangent-traction-gradient}
\end{equation}
Since $\Xi = -\ln\left|{\widehat{\phi}}/{\phi_\star}\right|$, if some suitable regularity conditions hold for $\phi$, then we have that (due to the inverse function theorem) $\widehat\phi$ and $X$ are locally determined by $\Xi$: $\widehat{\phi} = \varepsilon\,|\phi_\star|\,e^{-\Xi}$ and $X=-\frac12 \phi_\star^{\,2}\,e^{-2\Xi}$. That said, if we select a branch of this local inversion with a fixed orientation $\varepsilon$, then $\M+\B$ can locally be regarded as a function of $(\phi,\Xi)$. Since, by construction, we have that $\mathcal D_a\phi=0$ on each scalar leaf, then defining 
\begin{equation}
\Upsilon(\phi,\Xi) \defeq -\int_{\Xi_0}^{\Xi} 
\left[ \M(\phi,\zeta)+\B(\phi,\zeta) \right]\mathrm d\zeta\, ,
\label{eq:traction-potential-definition}
\end{equation}
we find that
\begin{equation}
8\pi j_a=\mathcal D_a\Upsilon\,. \label{eq:traction-potential}
\end{equation}
If this holds in the neighborhood of any point $p \in \Sigma$, then we have that $\forall p\in\Sigma \, $ $\exists V \subseteq \Sigma$ open neighborhood of $p$ and $\exists \Upsilon_V$ sufficiently smooth function such that $8\pi j_a |_{V}=\mathcal D_a\Upsilon_V$. In other words, the leaf-tangential traction is locally exact on every scalar leaf and satisfies
\begin{equation}
\mathcal D_{[a}\,j_{b]}=0\, ,\label{eq:traction-integrability}
\end{equation}
which gives a local integrability property of the effective Horndeski fluid.

This discussion can be summarized in the following:
\begin{theorem}
Under the above local invertibility assumption, $j$ is closed and locally exact on each scalar leaf.
\end{theorem}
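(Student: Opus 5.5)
The plan is to reduce the statement to the fact that $8\pi j_a$ is the leaf gradient of a scalar function. Exactness will give closedness immediately, because $\mathcal D$ is torsion-free on $(\Sigma,\ell)$.

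The first step is to rewrite $j$ as a leaf gradient. From Eq.~\eqref{eq:leaf-tangent-traction-112} and the identity $\mathcal D_a\Xi=-u_a\dot\Xi-\widehat s_a$ (which follows from Eq.~\eqref{eq:Xsheet}), I have Eq.~\eqref{eq:leaf-tangent-traction-gradient}, namely $8\pi j_a=-(\M+\B)\mathcal D_a\Xi$. This expression no longer refers to any $u\in{\rm Ker}(\mathrm d\phi)$. The coefficients $\M=\widehat\phi\,G_{4\phi}/G_4$ and $\B=-\widehat\phi\,XG_{3X}/G_4$ are functions of $(\phi,\widehat\phi)$ only. On a branch with fixed $\varepsilon$ we have $\widehat\phi=\varepsilon|\phi_\star|e^{-\Xi}$ and $X=-\tfrac12\phi_\star^2e^{-2\Xi}$, so $F(\phi,\Xi)\defeq\M+\B$ is a well-defined function of $(\phi,\Xi)$. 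If $G_2,G_3,G_4$ are smooth enough ($G_3\in C^1$ in $X$, $G_4\in C^1$), then $F$ is continuous in $\Xi$ and $C^1$ in $\phi$ on the relevant domain.

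The second step is to define $\Upsilon$ as in Eq.~\eqref{eq:traction-potential-definition} and compute its leaf gradient with the chain rule:
\begin{equation*}
\mathcal D_a\Upsilon=\partial_\phi\Upsilon\,\mathcal D_a\phi+\partial_\Xi\Upsilon\,\mathcal D_a\Xi .
\end{equation*}
The first term vanishes because $\mathcal D_a\phi=\ell_a{}^b\nabla_b\phi=\varepsilon\sqrt{-2X}\,\ell_a{}^bs_b=0$, using Eq.~\eqref{eq:scalar-leaf-projector}. By the fundamental theorem of calculus, $\partial_\Xi\Upsilon=-F(\phi,\Xi)$, so $\mathcal D_a\Upsilon=8\pi j_a$ on a neighbourhood $V$ of each $p\in\Sigma$. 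Hence $j$ is locally exact.

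The main obstacle is justifying that $\Upsilon_V$ is well defined and regular on $V$. The interval $[\Xi_0,\Xi]$ must lie in the range where the inversion holds and $F$ is continuous, and this is why the statement says ``locally'': I will shrink $V$ so that $\Xi(V)$ is contained in an interval on which $F(\phi,\cdot)$ is integrable. Alternatively, even more simply, on a single leaf $\phi$ is constant, so $\Upsilon|_\Sigma$ is just a primitive of $-F(\phi_0,\cdot)$ composed with $\Xi$. This makes leafwise exactness immediate, with no differentiability in $\phi$ needed.

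Closedness then follows. In local coordinates on $\Sigma$,
\begin{equation*}
\mathcal D_{[a}\mathcal D_{b]}\Upsilon=0
\end{equation*}
because the Levi--Civita connection $\mathcal D$ is torsion-free, provided $\Upsilon\in C^2$. This holds once $\Xi\in C^2$ (guaranteed by $\phi\in C^3$) and $F$ is $C^1$ in $\Xi$. If $F$ is only $C^0$, I will instead argue distributionally: $j=\mathrm d\Upsilon/8\pi$ in the sense of forms, and $\mathrm d^2=0$. This gives Eq.~\eqref{eq:traction-integrability}. Since exactness is local in $V$ and every $p\in\Sigma$ admits such a $V$, the conclusion holds on each whole leaf.
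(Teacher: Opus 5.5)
Your proposal is correct and follows essentially the same route as the paper. Both arguments write $8\pi j_a=-(\M+\B)\mathcal D_a\Xi$, regard $\M+\B$ as a function of $(\phi,\Xi)$ through $\widehat\phi=\varepsilon|\phi_\star|e^{-\Xi}$, take $\Upsilon$ as in Eq.~\eqref{eq:traction-potential-definition}, eliminate the $\partial_\phi\Upsilon$ term using $\mathcal D_a\phi=0$, and obtain closedness from $\mathcal D_{[a}\mathcal D_{b]}\Upsilon=0$.

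Your leafwise variant, in which a primitive of $-(\M+\B)(\phi_0,\cdot)$ is composed with $\Xi$, is a worthwhile refinement for two reasons. First, it makes your regularity claim unnecessary; that claim is slightly overstated, since $\M+\B$ being $C^1$ in $\phi$ actually needs $G_4\in C^2$ and a $\phi$-derivative of $G_{3X}$. Second, because the inversion is explicit on a fixed-$\varepsilon$ branch, it gives a potential on each whole connected leaf, not just locally.
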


To conclude the section we can observe that combining Eqs.~\eqref{eq:leaf-intrinsic-stress},~\eqref{eq:leaf-normal-pressure}, and~\eqref{eq:leaf-tangent-traction-gradient} allows us to rewrite effective Horndeski scalar-field stress-energy tensor as
\begin{align}
8\pi T^{(\phi)}_{ab}
={}&
\left[\mathcal U_{\parallel} -(\M+\B)\K \right]s_a s_b \nonumber\\
&-2(\M+\B)  s_{(a}\mathcal D_{b)}\Xi 
+\M \left(\K_{ab}-\K\ell_{ab}\right) \nonumber\\
&+\left[\mathcal U_{\perp} +(\M+\B)\widehat\Xi \right]\ell_{ab}\, , \label{eq:full-scalar-layer-stress}
\end{align}
and, based on our discussion above, 
the mixed term can also be written as
\begin{equation}
-2(\M+\B)s_{(a}\mathcal D_{b)}\Xi
=
2s_{(a}\mathcal D_{b)}\Upsilon \, .
\end{equation}
Clearly, all quantities in Eq.~\eqref{eq:full-scalar-layer-stress} are determined by the scalar foliation and the Horndeski functions. It is also worth noting that $\K_{ab}$, $\K$, $\M$, $\B$, and $\widehat\Xi$ are director-reversal odd, while $\ell_{ab}$, $\Xi$, and $\mathcal D_a\Xi$, and consequently the full~\eqref{eq:full-scalar-layer-stress}, are director-reversal even.

\subsection{Causal structure and preferred observers}
\label{subsec:causal-preferred-frames}
The observer-independent traction $j$ also clarifies which parts of its $(1{+}1{+}2)$ decomposition~\eqref{eq:normal_tangential_projection} are genuine and which are frame dependent. 

First, from~\eqref{eq:normal_tangential_projection} and~\eqref{eq:leaf-tangent-traction-gradient} we can compute
\begin{equation}
 j^2 \defeq  j^aj_a
 =-
 \left(\mathcal Q^{(\phi)}\right)^2
 +\widetilde\pi_a^{(\phi)}\widetilde\pi_{(\phi)}^a
 =\frac{(\M+\B)^2}{(8\pi)^2}
 \mathcal D_a\Xi\mathcal D^a\Xi \,.
\label{eq:traction-norm-main}
\end{equation}
From~\eqref{eq:leaf-tangent-traction-gradient} we know that if $\M+\B\neq0$, then $j_a$ and $\mathcal D_a\Xi$ determine the same leaf-tangential direction (whenever non-vanishing; otherwise they are both identically zero), with relative orientation fixed by the sign of $\M+\B$. Hence, they have the same causal character. When $\M+\B=0$, the traction vanishes independently of $\mathcal D_a\Xi$.

Notably, a timelike traction admits a unique future-directed ``traction-rest observer'' for which $\widetilde\pi_a^{(\phi)}=0$; a spacelike traction admits a one-parameter family of observers for which $\mathcal Q^{(\phi)}=0$; and a nonzero null traction admits neither. If $j_a=0$, both projections vanish in every scalar-adapted frame. The complete (pointwise) classification is given in Appendix~\ref{app:scalar-adapted-frames}.

The intrinsic leaf stress allows for the selection of a second class of preferred observers. Raising one index in Eq.~\eqref{eq:leaf-intrinsic-stress} gives 
\begin{equation}
8\pi\tau^a{}_b
=
\M\K^a{}_b
+\left(\mathcal P_{\ell}-\M\K\right)\ell^a{}_b \,.
\label{eq:leaf-stress-mixed-form-main}
\end{equation}
Hence, if $\M\neq0$, then $\tau^a{}_b$ and $\K^a{}_b$ have the same eigendirections. 

For $\M\neq0$, any timelike eigenvector of $\K^a{}_b$ is simultaneously a principal-curvature observer and a leaf-Landau observer. The former terminology is related to the fact that eigenvectors of the second fundamental form determine the principal curvature directions, while the latter is adopted in analogy with the standard Landau energy frame~\cite{Andersson:2020phh} (a timelike eigenvector of the fluid's stress-energy tensor).
However, the existence of such an observer is not guaranteed. Indeed, a self-adjoint operator on a Lorentzian leaf does not necessarily have a timelike eigenvector, as established in the Lorentzian algebraic classification of symmetric tensors~\cite{Santos:1995kt}.

The full four-dimensional scalar-adapted Landau criterion, for $\M\neq0$, reads
\begin{equation}
\K^a{}_b u^b\propto u^a\,,
\qquad
u^a j_a=0\,,
\label{eq:full-scalar-adapted-Landau-criterion-main}
\end{equation}
where the second condition removes the longitudinal flux. The detailed compatibility of traction-adapted, principal-curvature, leaf-Landau, and full Landau frames is collected in Appendix~\ref{app:scalar-adapted-frames}.

\subsection{Scalar-layer constitutive structure}
\label{subsec:constitutive-structure}

The observer-independent tensor decomposition in~\eqref{eq:full-scalar-layer-stress} can be compared with a class of first-order scalar-layer stress tensors.\footnote{Here ``first order'' refers to the constitutive expansion: $(\phi,X)$ are treated as local state variables, while $\K_{ab}$, $\K$, $\widehat\Xi$, and $\mathcal D_a\Xi$ constitute the retained response variables. Terms involving additional derivatives of these quantities or intrinsic leaf-curvature tensors belong to higher-derivative classes.} 

First, let us construct the most general first-order observer-independent scalar-layer stress-energy tensor from first principles. Let $(\phi,X)$ be the independent local state variables. Director-reversal-even state functions depend only on them, while director-reversal-odd response coefficients may carry an overall multiplicative factor of $\widehat\phi=\varepsilon\sqrt{-2X}$. Proceeding in analogy with standard continuum mechanics (see e.g.~\cite{Liu:2002Continuum}), we have that, identifying $\K_{ab}$, $\K$, $\widehat\Xi$, and $\mathcal D_a\Xi$ as first-order response variables, the stress-energy tensor will take the following schematic form
$$
T_{ab} = T_{ab} (\phi,X,\K_{ab}, \K,\widehat\Xi, \mathcal D_a\Xi, \cdots) \, .
$$
One then performs a formal expansion (derivative expansion, see e.g.~\cite{Dubovsky:2011sj}) of the constitutive relation in the ``number of derivatives'' of the state fields, i.e.,
$$
T_{ab} = T_{ab}^{(0)} + T_{ab}^{(1)} + T_{ab}^{(2)} + \cdots \, .
$$
In our case, $\phi$ and $X$ are order-zero quantities, whereas $\K_{ab}$, $\K$, $\widehat\Xi$, and $\mathcal D_a\Xi$ are order-one.

Since we are interested in symmetric tensors that are local,  director-reversal even within the leaves and overall, and at most linear in the response quantities, we have that, at the derivative order considered here, the scalar-locked identity
\begin{equation}
s^b\nabla_b s_a=-\mathcal D_a\Xi
\label{eq:normal-acceleration-layer-gradient-main}
\,,
\end{equation}
shows that $\mathcal D_a\Xi$ exhausts the available director-reversal even leaf-tangent vector sector. Furthermore, the most general observer-independent tensor within this restricted class reads
\begin{align}
8\pi T^{(\mathrm{layer})}_{ab}
={}&
\left[
\mathcal U_{\parallel}
+a_1\,\K
+a_2\,\widehat\Xi
\right]s_a s_b
+2a_3\,s_{(a}\mathcal D_{b)}\Xi
\nonumber\\
&+
a_4\,\K_{ab}
+
\left[
\mathcal U_{\perp}
+a_5\,\K
+a_6\,\widehat\Xi
\right]\ell_{ab}
\label{eq:general-linear-scalar-layer-medium}
\,.
\end{align}

Under $s^a\rightarrow-s^a$, the quantities $\widehat\phi$, $\K_{ab}$, $\K$, and $\widehat\Xi$ change sign, whereas $\ell_{ab}$, $\Xi$, and $\mathcal D_a\Xi$ remain unchanged.  Director-reversal invariance therefore requires the six derivative-response coefficients to be director-reversal odd. Hence, these coefficients may be written as
$a_i=\widehat\phi\,\bar a_i(\phi,X)$, where the functions $\bar a_i$ are director-reversal even.

Comparing~\eqref{eq:general-linear-scalar-layer-medium} with~\eqref{eq:full-scalar-layer-stress} gives
\begin{equation}
\begin{gathered}
a_1=-(\M+\B)
\,,
\qquad
a_2=0
\,,
\qquad
a_3=-(\M+\B)
\,,
\\
a_4=\M
\,,
\qquad
a_5=-\M
\,,
\qquad
a_6=\M+\B
\, ,
\end{gathered}
\label{eq:Horndeski-general-coefficient-map}
\end{equation}
and hence the six, a priori independent, derivative-response coefficients satisfy
\begin{equation}
a_2=0
\,,
\qquad
a_5=-a_4
\,,
\qquad
a_3=a_1
\,,
\qquad
a_6=-a_1
\,,
\label{eq:Horndeski-constitutive-identities-main}
\end{equation}
and reduce to the two combinations $\M$ and $\M+\B$ for the scalar-locked viable Horndeski case. These identities characterize the viable Horndeski response within the restricted scalar-layer class defined in Eq.~\eqref{eq:general-linear-scalar-layer-medium}.

The scalar-layer representation established above could support a partial comparison with lamellar media, specifically with smectic-A liquid crystals; see e.g.~\cite{ClimentEzquerra:2013sma, Paget:2022}. However, this analogy turns out to be not particularly relevant to this work, and it is therefore deferred to future studies.

\section{Static spherically symmetric spacetime}\label{sec:static}
%
We now specialize our discussion to the static spherically symmetric spacetime in which the scalar gradient is radial and spacelike. Hence, the metric and scalar profile read
\begin{equation}
g = - A(r) \, {\rm d} t \otimes {\rm d} t + B(r) \, {\rm d} r \otimes {\rm d} r + r^2 \, g_{\mathbb{S}^2}
\, ,   \quad \phi = \phi(r) \, ,
\end{equation}
in the coordinate chart $(t,r,\theta , \varphi)$ adapted to a Killing vector field
\[
K^a=\left(\frac{\partial}{\partial t}\right)^a \, ,
\]
that is timelike throughout the static region under consideration, with $t$ denoting the Killing time, with $r$ denoting the areal radius, and with $g_{\mathbb{S}^2} = {\rm d} \theta \otimes {\rm d} \theta + \sin^2 \theta \, {\rm d} \varphi \otimes {\rm d} \varphi$ denoting the standard metric on the unit 2-sphere.

Note that we are restricting the discussion to the case of $\phi=\phi(r)$. Scalar profiles such as $\phi (t,r) = q \, t + \psi (r)$, which are known to be compatible with static geometries in some shift-symmetric models, are not included here. Such configurations would require a separate analysis because the scalar gradient would not be purely locked to the radial spacelike direction~\cite{Babichev:2013cya}. More generally, any tensor field compatible with the symmetry is Lie-dragged along the generators of the symmetry group.

\subsection{Kinematics}\label{subsec:kinematics-static}

The natural congruence of static observers is the one aligned with the timelike Killing vector, namely
\begin{equation}
  u^a\defeq\frac{K^a}{\sqrt{-K^cK_c}}=\frac{1}{\sqrt{A}}\left(\frac{\partial}{\partial t}\right)^a,
  \qquad
  u_a=-\sqrt{A}\, ({\rm d}t)_a\,.
\end{equation}
The corresponding spatial projector is therefore
\begin{equation}
 h_{ab}=g_{ab}+u_a u_b=B(r)\,\nabla_a r\,\nabla_b r+r^2 \, g^{\mathbb{S}^2}_{ab}\,.
\end{equation}

Let us assume that $\phi(r)$ is monotonic and write
\begin{equation}
  \nabla_\mu\phi= \phi'(r)\, ({\rm d}r)_\mu =\phi'(r)\,\delta_\mu^r\,,
  \qquad
  X=-\frac{\phi'^2}{2B}<0\,.
\end{equation}
We choose the orientation sign $\varepsilon$ so that $s$ points toward increasing $r$. Equivalently, on the branch under consideration, $\varepsilon=\operatorname{sign}\bigl(\phi'(r)\bigr)$. Then, using the definition of $s$, we obtain
\begin{equation}
  s^a=\frac{1}{\sqrt{B}}\left(\frac{\partial}{\partial r}\right)^a\,,
  \qquad
  s_a=\sqrt{B}\,({\rm d}r)_a\,,
  \qquad
  \widehat\phi=s^a \nabla_a\phi=\frac{\phi'(r)}{\sqrt{B}}=\varepsilon\sqrt{-2X}\,.
\end{equation}
Accordingly, the sheet projector turns into
\begin{equation}
  \gamma_{ab}=g_{ab}+u_a u_b-s_a s_b=r^2 \, g^{\mathbb{S}^2}_{ab}\,.
\end{equation}
We now show explicitly which kinematical quantities are zero. First,
\begin{equation}
  \Theta=\nabla_a u^a=\frac{1}{\sqrt{-g}}\partial_\mu\left(\sqrt{-g}\,u^\mu\right)=\frac{1}{\sqrt{-g}}\partial_t\left(\sqrt{-g}\,u^t\right)=0,
\end{equation}
since $u^a$ has only a time component and all metric functions are time-independent. Next, for spatial indices $i,j=r,\theta,\varphi$ one has
\begin{equation}
  \nabla_i u_j=\partial_i u_j-\Gamma^\mu{}_{ij}u_\mu=-\Gamma^t{}_{ij}u_t=0,
\end{equation}
because $u_i=0$ and, for the static diagonal metric above, $\Gamma^t{}_{ij}=0$. Therefore,
\begin{equation}
  h_a{}^c h_b{}^d\nabla_c u_d=0 \quad \Rightarrow \quad \sigma_{ab}=0=\omega_{ab}\,.
\end{equation}

Let us now turn to the kinematics of $s_\mu=\sqrt{B}\,\delta_\mu^r$. Its derivative along $u^a$ is
\begin{align}
  \dot{s}_\mu
  &=u^\alpha\nabla_\alpha s_\mu
  =u^t\nabla_t s_\mu
  =-u^t\Gamma^\alpha{}_{t\mu}s_\alpha
  =-\frac{1}{\sqrt{A}}\Gamma^r{}_{t\mu}\sqrt{B}.
\end{align}
Since the only non-vanishing component is obtained for $\mu=t$, with
\begin{equation}
  \Gamma^r{}_{tt}=\frac{A'}{2B},
\end{equation}
we get
\begin{equation}
  \dot{s}_\mu=-\frac{A'}{2\sqrt{AB}}\,\delta_\mu^t
  =\frac{A'}{2A\sqrt{B}}u_\mu.
\end{equation}
Comparing with $\dot{s}_a=\mathcal{A}u_a+\alpha_a$, this shows that
\begin{equation}
  \mathcal A=\frac{A'}{2A\sqrt{B}}\,,\qquad \alpha_\mu=0\,.
\end{equation}
Likewise,
\begin{align}
  \dot{u}_\mu
  &=u^\alpha\nabla_\alpha u_\mu
  =u^t\nabla_t u_\mu
  =-u^t\Gamma^\gamma{}_{t\mu}u_\gamma
  =\frac{1}{\sqrt{A}}\Gamma^t{}_{t\mu}\sqrt{A}
  =\frac{A'}{2A}\,\delta_\mu^r=\mathcal A s_\mu\,,
\end{align}
which is purely radial, or equivalently,
\begin{equation}
    \dot{u}_\mu=\nabla_\mu\ln\sqrt{A(r)}\,.
\end{equation}
Therefore
\begin{equation}
  \beta_\mu\defeq\gamma_\mu{}^\beta\dot{u}_\beta=0.
\end{equation}

For $\widehat{s}_a$ one finds
\begin{align}
  \widehat{s}_a
  &=s^bD_b s_a
  =s^b h_a{}^c\nabla_b s_c.
\end{align}
Since $s_c$ has only a radial component, the only potentially non-vanishing contribution is the radial one:
\begin{align}
  \widehat{s}_r
  &=s^r\left(\partial_r s_r-\Gamma^r{}_{rr}s_r\right)
  =\frac{1}{\sqrt{B}}\left(\frac{B'}{2\sqrt{B}}-\frac{B'}{2B}\sqrt{B}\right)=0,
\end{align}
and all the other components vanish identically. Hence $\widehat{s}_a=0$.

Next, let us consider the projected derivative of $s_a$ on the sheet. Since $s_a$ has no angular components,
\begin{equation}
  \gamma_\alpha{}^\mu\gamma_\beta{}^\nu\nabla_\mu s_\nu =\gamma_\alpha{}^\mu\gamma_\beta{}^\nu\left(\partial_\mu s_\nu-\Gamma^\sigma{}_{\mu \nu}s_\sigma\right)
  =-\sqrt{B}\,\gamma_\alpha{}^\mu\gamma_\beta{}^\nu\Gamma^r{}_{\mu\nu}.
\end{equation}
The only non-vanishing components are
\begin{equation}
  \nabla_\theta s_\theta=-\Gamma^r{}_{\theta\theta}s_r=\frac{r}{\sqrt{B}},
  \qquad
  \nabla_\varphi s_\varphi=-\Gamma^r{}_{\varphi\varphi}s_r=\frac{r\sin^2\theta}{\sqrt{B}},
\end{equation}
which can be written as
\begin{equation}
  \gamma_\alpha{}^\mu\gamma_\beta{}^\nu\nabla_\mu s_\nu=\frac{1}{r\sqrt{B}}\,\gamma_{\alpha \beta}\quad\Rightarrow\quad\widetilde{\Theta}=\frac{2}{r\sqrt{B}}\,,\quad \widetilde{\omega}=0=\widetilde{\sigma}_{\alpha \beta}\,.
\end{equation}
Finally, since $X=X(r)$,
\begin{equation}
  \dot X=u^\alpha\nabla_\alpha X=0\,,\quad\widehat X=s^\alpha\nabla_\alpha X=\frac{1}{\sqrt{B}}\frac{{\rm d}X}{{\rm d}r}
  =
\frac{1}{B^{3/2}}\left(-\phi'\phi''+\frac{\phi'^2 B'}{2B}\right).
\end{equation}
Hence, in a static spherically symmetric spacetime, the only non-vanishing scalar kinematical quantities are 
$$
\mathcal{A}=\frac{1}{\sqrt{B}} \frac{{\rm d}\ln\sqrt{A}}{{\rm d}r} \, , \quad 
\widetilde{\Theta}=\frac{2}{r\sqrt{B}} \, , \quad \mbox{and} \quad
\widehat X= \frac{1}{\sqrt{B}} \frac{{\rm d}X}{{\rm d}r} \, .
$$

This discussion can then be summarized as follows:
\begin{theorem} \label{th:kin-static}
    For a scalar-locked effective viable Horndeski fluid on a static, spherically symmetric spacetime, with the scalar field depending only on the radial coordinate, the associated kinematic quantities of the $(1{+}1{+}2)$ split read:
    $$
    \widehat s_\alpha=0,
    \quad 
    \widetilde{\sigma}_{\alpha \beta}=0,
    \quad
    \widetilde{\omega} = 0,
    \quad
    \alpha_\alpha=0 \, , \quad \beta_\alpha=0 \, , \quad
    \mathcal{A}=\frac{(\ln\sqrt{A})'}{\sqrt{B}}  \, , \quad \mbox{and} \quad 
    \widetilde{\Theta}=\frac{2}{r\sqrt{B}} \, , 
    $$
    with respect to the coordinate chart $(t,r,\theta , \varphi)$ and 
    where the prime denotes the derivative with respect to $r$. Furthermore, it holds that $\dot{X} =0$ and $\widehat{X} = X'/\sqrt{B}$.
\end{theorem}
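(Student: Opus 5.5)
The plan is a direct coordinate computation in the static chart $(t,r,\theta,\varphi)$, with $u=A^{-1/2}\partial_t$ and $s=B^{-1/2}\partial_r$ fixed by the orientation choice $\varepsilon=\operatorname{sign}\phi'$. First I would list the nonvanishing Christoffel symbols of the diagonal metric:
\[
\Gamma^t{}_{tr}=\tfrac{A'}{2A},\quad
\Gamma^r{}_{tt}=\tfrac{A'}{2B},\quad
\Gamma^r{}_{rr}=\tfrac{B'}{2B},\quad
\Gamma^r{}_{\theta\theta}=-\tfrac{r}{B},\quad
\Gamma^r{}_{\varphi\varphi}=-\tfrac{r\sin^2\theta}{B},
\]
together with the purely angular ones. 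In particular $\Gamma^t{}_{ij}=0$ for spatial $i,j$.

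With these, I would compute $\dot s_\mu=-u^t\Gamma^r{}_{t\mu}s_r$ and $\dot u_\mu=-u^t\Gamma^t{}_{t\mu}u_t$. This gives $\dot s=\mathcal A\,u$ and $\dot u=\mathcal A\,s$ with $\mathcal A=A'/(2A\sqrt B)$. By the decompositions \eqref{eq:s-dot-decomposition}--\eqref{eq:u-dot-decomposition}, it follows at once that $\alpha=0=\beta$.

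Next I would compute $\nabla_\mu s_\nu=\partial_\mu s_\nu-\Gamma^r{}_{\mu\nu}\sqrt B$ on spatial indices. The $rr$ component vanishes because $\partial_r\sqrt B=\sqrt B\,\Gamma^r{}_{rr}$, which gives $\widehat s_a=0$. The angular block is $\frac{1}{r\sqrt B}\gamma_{\alpha\beta}$, which is pure trace. This yields $\widetilde\Theta=2/(r\sqrt B)$ and $\widetilde\sigma=0$, and $\widetilde\omega=0$ by symmetry. The vanishing of $\widetilde\omega$ also follows independently from \eqref{eq:sheet-vorticity}.

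Finally, $X=-\phi'^2/(2B)$ depends only on $r$. Hence $\dot X=u^t\partial_tX=0$ and $\widehat X=s^r\partial_rX=X'/\sqrt B$.

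As a consistency check, \eqref{eq:Xsheet} with $\widehat s=0$ agrees with $X$ having no angular gradient, and \eqref{eq:chi-scalar-locked} then gives $\chi=0$. This matches $\sigma=\omega=\Theta=0$, which follows from $\Gamma^t{}_{ij}=0$. The only step needing care is the bookkeeping of index positions and signs, in particular $u_t=-\sqrt A$ versus $u^t=1/\sqrt A$, in the $\dot s$, $\dot u$ computation. No conceptual obstacle is expected.
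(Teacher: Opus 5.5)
Your proposal is correct and follows essentially the same route as the paper. Both compute $\dot s_\mu$, $\dot u_\mu$, $\widehat s_a$ and the sheet-projected $\nabla_\mu s_\nu$ directly from the Christoffel symbols in the static chart, then use $X=X(r)$ to get $\dot X$ and $\widehat X$; your closing cross-check that \eqref{eq:chi-scalar-locked} gives $\chi=0$, consistent with $\Theta=\sigma=\omega=0$, is a small extra sanity check the paper does not spell out.
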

\subsection{Scalar-layer form of the static medium}
\label{subsec:static-scalar-layer}
We can now easily use the results of Theorem~\ref{th:kin-static} to begin computing the full $(1{+}1{+}2)$ imperfect-fluid decomposition and scalar-layer decomposition of a scalar-locked effective viable Horndeski fluid on a static, spherically symmetric spacetime, with the scalar field depending only on the radial coordinate. Indeed, combining the results of Theorem~\ref{th:kin-static} with~\eqref{eq:q-scalar}--\eqref{eq:biaxial_splay} we immediately find a corollary to Theorem~\ref{th:kin-static}:
\begin{corollary}
    For a scalar-locked effective viable Horndeski fluid on a static, spherically symmetric spacetime, with the scalar field depending only on the radial coordinate, the heat-flux channels and the vector/tensor anisotropic channels of the general $(1{+}1{+}2)$ split of the effective stress-energy tensor read:
    $$
    \mathcal Q^{(\phi)}_\alpha=0\,,
    \quad
    \mathcal Q^{(\phi)} =0\,,
    \quad
    \widetilde{\pi}_\alpha^{(\phi)}=0,
    \quad \mbox{and} \quad
    \widetilde{\pi}_{\alpha \beta}^{(\phi)}=0 \, . 
    $$
\end{corollary}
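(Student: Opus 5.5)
The plan is to substitute the vanishing kinematic quantities of Theorem~\ref{th:kin-static} directly into the characteristic relations \eqref{eq:scalar_heat_flux}--\eqref{eq:biaxial_splay}. Each of the four channels in the corollary is linear and homogeneous in exactly one kinematic variable, multiplied by a scalar constitutive coefficient ($\M$ or $\M+\B$) that depends only on $(\phi,X)$. The argument therefore reduces to checking that the relevant kinematic factor vanishes and that the coefficient in front of it is finite.

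Concretely, I would proceed channel by channel:
\begin{itemize}
\item \emph{Sheet heat flux.} By \eqref{eq:vect_heat_flux}, $8\pi\mathcal Q_a^{(\phi)}=-\M\,\alpha_a$. Theorem~\ref{th:kin-static} gives $\alpha_\alpha=0$, so $\mathcal Q_\alpha^{(\phi)}=0$.
\item \emph{Longitudinal heat flux.} By \eqref{eq:scalar_heat_flux}, $8\pi\mathcal Q^{(\phi)}=(\M+\B)\dot\Xi$. From \eqref{eq:Xi-derivatives}, $\dot\Xi=\dot X/\widehat\phi^{\,2}$, and Theorem~\ref{th:kin-static} gives $\dot X=0$. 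Here I will note that $\widehat\phi^{\,2}=-2X>0$ in the spacelike region, so no division by zero occurs.
\item \emph{Vector anisotropic stress.} By \eqref{eq:bend}, $8\pi\widetilde\pi_a^{(\phi)}=(\M+\B)\widehat s_a$, which vanishes because $\widehat s_\alpha=0$.
\item \emph{Sheet tensor anisotropic stress.} By \eqref{eq:biaxial_splay}, $8\pi\widetilde\pi_{ab}^{(\phi)}=\M\,\widetilde\sigma_{ab}$, which vanishes because $\widetilde\sigma_{\alpha\beta}=0$.
\end{itemize}

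The only point needing care is well-definedness. $\M$ and $\B$ contain $1/G_4$, so I would state explicitly the standing assumption $G_4\neq0$; it is already implicit in \eqref{eq:T-eff}. With that assumption and $X<0$, both coefficients are finite smooth functions on the static region, and the products vanish identically.

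As a consistency check, I would also observe that $j_a=0$ follows from \eqref{eq:leaf-tangent-traction-112}. Indeed, the integral curves of $s$ are then geodesics, since $s^b\nabla_b s_a=\widehat s_a+\dot\Xi u_a=0$ by \eqref{eq:spacetime-versus-spatial-bend}. This agrees with the vanishing of both $\mathcal Q^{(\phi)}$ and $\widetilde\pi_a^{(\phi)}$.
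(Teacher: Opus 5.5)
Your proposal is correct and follows the paper's own argument: the corollary is obtained by substituting the vanishing kinematic quantities of Theorem~\ref{th:kin-static} ($\alpha_\alpha=0$, $\dot X=0$, $\widehat s_\alpha=0$, $\widetilde\sigma_{\alpha\beta}=0$) into the characteristic relations for $\mathcal Q^{(\phi)}_a$, $\mathcal Q^{(\phi)}$, $\widetilde\pi^{(\phi)}_a$ and $\widetilde\pi^{(\phi)}_{ab}$. Your extra remarks are correct but go beyond what the paper spells out: the finiteness of $\M$ and $\M+\B$ (given $G_4\neq0$ and $X<0$), and the consistency check via $j_a=0$ together with the geodesic character of the radial integral curves of $s$.
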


This means that the effective scalar sector reduces to a purely uniaxial equilibrium medium, fully characterized by $\rho^{(\phi)}$, $p_\parallel^{(\phi)}$, and $p_\perp^{(\phi)}$. 

For the considered static spherically symmetric configurations with $\phi=\phi(r)$, the scalar leaves coincide locally with the timelike worldtubes $r=\mathrm{const.}$ (the latter being diffeomorphic to $\mathbb{R} \times \mathbb{S}^2$ in the static region). Furthermore, by staticity and spherical symmetry, we have that $\Xi=\Xi(r)$, and hence
\begin{equation}
\mathcal D_\alpha\Xi=0, \qquad j_\alpha=0\,.
\label{eq:static-traction-vanishes}
\end{equation}
Moreover, from Theorem~\ref{th:kin-static} we also have that the extrinsic curvature remains non-vanishing in general; indeed
\begin{equation}
\K_{\alpha \beta}
=
-\mathcal A u_\alpha u_\beta
+\frac12\widetilde\Theta\,\gamma_{\alpha \beta}\,,
\qquad
\K
=
\mathcal A+\widetilde\Theta\,.
\label{eq:static-scalar-leaf-curvature}
\end{equation}
Thus, the leaf-stress relation does not trivialize:
\begin{equation}
8\pi\tau_{\alpha \beta}
=
\M
\left(
\K_{\alpha \beta}-\K\ell_{\alpha \beta}
\right)
+
\mathcal P_{\ell}\ell_{\alpha \beta}\,.
\label{eq:static-leaf-stress}
\end{equation}
The static background can thus be viewed geometrically as a continuous foliation by concentric timelike layers with vanishing tangential traction, a generally non-vanishing normal stress, and an intrinsic stress depending linearly on their extrinsic curvature. While the bend and biaxial-splay channels vanish in exact static spherical symmetry, this curvature-dependent leaf stress remains non-trivial on the background.
Interestingly, the static observer is simultaneously a principal-curvature observer, a leaf-Landau observer, and a Landau observer of the complete effective scalar stress-energy tensor.

\subsection{Tolman--Ehrenfest equilibrium criterion}\label{subsec:tolman}

A natural way to introduce the temperature profile in a static gravitational field is through the Tolman--Ehrenfest criterion~\cite{Tolman:1930zza, Tolman:1930ona}. In the Eckart-like fluid analogy, equilibrium is characterized by the vanishing of the heat flux in the local rest frame,
\begin{equation}
q_a=0.
\end{equation}
In Eckart's first-order thermodynamics, this condition is encoded in the constitutive relation~\cite{Eckart:1940te}
\begin{equation}
q_a=-\kappa\,h_a{}^{b}\left(\nabla_b \T+\T\,\dot u_b\right),
\end{equation}
so that, for nonvanishing thermal conductivity, equilibrium requires
\begin{equation}
\label{eq:tolcrit}
h_a{}^{b}\nabla_b \T+\T\,\dot u_a=0.
\end{equation}
This is the local covariant form of the Tolman--Ehrenfest criterion, which states that temperature gradients are exactly balanced by the four-acceleration of the static observers.

In the static spherically symmetric case, the Tolman temperature depends only on $r$, and the acceleration is purely radial, $\dot u_\alpha =\nabla_\alpha \ln\sqrt{A(r)}$. Hence, from~\eqref{eq:tolcrit} it is easy to show that
\begin{equation}
\T(r)=\frac{\T_0}{\sqrt{A(r)}} \, ,
\end{equation}
although this result was expected in light of~\cite{Karolinski:2024ukr}.

Now, let us make explicit the link between Tolman temperature and $\mathcal A\defeq s^a\dot u_a$:
$$
\widehat{\ln\T}\defeq s^\alpha\nabla_\alpha\ln\T
=
\frac{(\ln\T)'}{\sqrt{B}} 
=
-\frac{1}{2\sqrt{B}}\frac{A'}{A}
=
-\mathcal A.
$$
Hence the radial acceleration of the static congruence is exactly the Tolman temperature gradient,
\begin{equation}
\mathcal A=-\widehat{\ln\T} \, .
\end{equation}
Note that the same radial equilibrium condition follows directly from the uniaxial law~\eqref{eq:eckart-longitudinal-flux-general} when $\mathcal Q=0$ and $\kappa_\parallel>0$. Moreover, let us stress that here $\T$ denotes the Tolman temperature measured by the static observers and is not, in general, identified with the candidate effective temperature $\T^{(\phi)}(\phi,X)$ introduced in Sec.~\ref{subsec:eckart-matching}.

Using this identity, the characteristic equations~\eqref{eq:energy_density}--\eqref{eq:transverse_pressure} can be rewritten in a form where the local equilibrium temperature enters explicitly:
\begin{align}
8\pi\,\rho^{(\phi)}
&=
-\mathcal U_\perp+\M\,\widetilde{\Theta}-(\M+\B)\,\widehat\Xi, \label{rho}
\\
8\pi\,p_{\parallel}^{(\phi)}
&=
\mathcal U_\parallel+(\M+\B)\,\widehat{\ln\T}-(\M+\B)\,\widetilde{\Theta}, \label{p1}
\\
8\pi\,p_{\perp}^{(\phi)}
&=
\mathcal U_\perp+\M\,\widehat{\ln\T}-\frac12\,\M\,\widetilde{\Theta}+(\M+\B)\,\widehat\Xi, \label{p2}
\end{align}
where
\begin{equation}
\widehat\Xi\defeq-\widehat{\ln\lvert\widehat\phi\rvert}\,,
\qquad
\widetilde{\Theta}=\frac{2}{r\sqrt{B}}=2 \widehat{\ln r}\,,
\qquad
\M\defeq\widehat\phi \,\frac{G_{4\phi}}{G_4}=\widehat{\ln G_{4}}\,,
\qquad
\B
\defeq
-\widehat\phi\,\frac{XG_{3X}}{G_4}\,.
\end{equation}
Accordingly, the effective scalar stress-energy tensor takes the uniaxial equilibrium form
\begin{equation}\label{eq:static_energy_tensor}
T^{(\phi)}_{\alpha \beta}
=
\rho^{(\phi)}u_\alpha u_\beta
+
p_{\parallel}^{(\phi)} s_\alpha s_\beta
+
p_{\perp}^{(\phi)} \gamma_{\alpha \beta},
\end{equation}
with coefficients as in~\eqref{rho}--\eqref{p2}.

This rewriting makes clear that the Tolman law fixes the local equilibrium temperature profile measured by the static observers in terms of the lapse function $A(r)$. Since the lapse is determined on shell by the Horndeski field equations, this profile depends implicitly on the Horndeski functions, the scalar configuration, and the boundary conditions. The Tolman relation alone, however, does not determine the local constitutive coefficients $\mathcal U_\parallel$, $\mathcal U_\perp$, $\M$, and $\B$, nor does it provide a universal relation between them and $\T$.

The static reduction shares the formal uniaxial stress structure of relativistic elastic equilibrium models of spherical matter~\cite{Karlovini:2002fc, Alho:2023ris}. At the same time, the coexistence of an equilibrium temperature profile with a static anisotropic stress response shares some similarities with relativistic thermoelastic frameworks such as in~\cite{Kijowski:1997mx}.

\section{Conclusions}

We have presented a thermomechanical framework for the investigation of viable Horndeski scalar hair with spacelike scalar-field gradients. Within the imperfect-fluid and thermodynamic approaches to scalar-tensor gravity, this was one of the missing pieces required to study configurations such as black holes.

In Sec.~\ref{sec:kinematics} we began by introducing the generalities of the kinematics of the $(1{+}1{+}2)$ split, while in Sec.~\ref{sec:decomposition} we described the imperfect-fluid representation for generic stress-energy tensors in the $(1{+}3)$ and $(1{+}1{+}2)$ decompositions.

In Sec.~\ref{sec:scalargrad} we specialized the general arguments of the previous sections to the case of a preferred spatial direction determined by a scalar field with spacelike gradient (i.e., the scalar-locked direction), in the context of viable Horndeski gravity. The result is a $(1{+}1{+}2)$ effective imperfect-fluid representation of the class, together with the corresponding effective constitutive laws. Note that the results of Sec.~\ref{sec:scalargrad} confirm and extend those of~\cite{Gergely:2026drx} to viable Horndeski gravity, within the spacelike-gradient case.

This $(1{+}1{+}2)$ imperfect fluid description turns out to be non-unique for the scalar-locked case. The reason is that the timelike direction is not directly determined by the scalar-field gradient, as is instead the case in the thermodynamics of scalar-tensor gravity~\cite{Faraoni:2021lfc, Faraoni:2021jri}; rather, it displays a local ${\rm SO}^{\uparrow}(1,2)$ gauge symmetry. On the one hand, this leads to an observer-dependent $(1{+}1{+}2)$ framework for the scalar-locked effective fluid of viable Horndeski gravity [Sec.~\ref{sec:observer-dependent}], that maps roughly onto the theory of anisotropic media with a preferred spatial direction, upon choosing a timelike direction $u \in {\rm Ker} \, ({\rm d} \phi)$. On the other hand, in Sec.~\ref{sec:scalar-layer-representation} we formulated an observer-independent framework that ultimately leads to a layered-media description of the effective stress-energy tensor based on the foliation of the considered spacetime region by constant-scalar leaves. This latter description is what we refer to as the scalar-layer representation.
More specifically, the scalar-layer representation shows that the intrinsic leaf stress contains a contribution proportional to the trace-adjusted extrinsic curvature of the scalar leaves, while the leaf-tangential traction is locally exact and can be written as the intrinsic gradient of a scalar potential. The causal character of this traction determines which of its observer-dependent projections can be eliminated by a change of scalar-adapted frame, while the intrinsic leaf stress provides geometrical criteria for the existence of principal-curvature, leaf-Landau, and full Landau observers. Moreover, within the restricted first-order scalar-layer constitutive class considered here, the six derivative-response coefficients reduce to the two combinations $\M$ and $\M+\B$.

In Sec.~\ref{sec:static} we then specialized the analysis to the case of static spherically symmetric spacetime with scalar hair which depends only on the areal radius. This is indeed the case of interest for many relevant solutions of scalar-tensor gravity, specifically in the context of black holes, wormholes, and naked singularities (see e.g.~\cite{Faraoni:2021nhi} and references therein). An interesting consequence of this specialization is that it allows us to connect the thermomechanical framework with the Tolman--Ehrenfest criterion for thermal equilibrium. 
Indeed, the heat-flux channels and the vector and sheet trace-free anisotropic stresses vanish, so that the effective scalar source reduces to a uniaxial equilibrium medium, while the scalar-layer intrinsic stress retains a nontrivial dependence on the extrinsic curvature of the scalar leaves. For the static observers, the Tolman--Ehrenfest condition yields the standard relation $\T(r)\sqrt{A(r)}=\mathrm{constant}$.

\section*{Acknowledgements}

\noindent M.S. and A.G. are supported by the Italian Ministry of Universities and Research (MUR) through the grant ``BACHQ: Black Holes and The Quantum'' (grant no. J33C24003220006) and by the INFN project FLAG. M.M. acknowledges support from the INFN  project MOONLIGHT-2. This work has been carried out in the framework of activities of the National Group of Mathematical Physics (GNFM, INdAM).

\appendix

\section{Pointwise classification of scalar-adapted frames}
\label{app:scalar-adapted-frames}

This appendix completes the analysis of Sec.~\ref{subsec:causal-preferred-frames} by giving the exhaustive pointwise classification of the distinguished scalar-adapted frames. For each class of frames, we reproduce the minimal set of identities needed for the classification, while referring to the main text for their derivation and geometrical interpretation.

\subsection{Causal classification of the leaf-tangential traction}
\label{app:causal-classification}

For any scalar-adapted observer, i.e., a future-directed unit timelike vector $u \in {\rm Ker} \, ({\rm d} \phi)$, the leaf-tangential traction decomposes as [cf.~\eqref{eq:normal_tangential_projection}]
\begin{equation}
j_a=\mathcal Q^{(\phi)}u_a+\widetilde\pi_a^{(\phi)} \,\,\, \Rightarrow \,\,\,
u^aj_a=-\mathcal Q^{(\phi)}.
\label{eq:traction-observer-decomposition-app}
\end{equation}
Its intrinsic gradient representation and squared norm are [cf.~\eqref{eq:leaf-tangent-traction-gradient} and~\eqref{eq:traction-norm-main}]
\begin{align}
8\pi j_a&=-(\M+\B)\mathcal D_a\Xi,
\label{eq:traction-gradient-app}
\\
j^2&=-\left(\mathcal Q^{(\phi)}\right)^2+\widetilde\pi_a^{(\phi)}\widetilde\pi_{(\phi)}^a.
\label{eq:traction-norm-app}
\end{align}
If $\M+\B\neq0$, $j_a$ and $\mathcal D_a\Xi$ span the same leaf-tangential one-dimensional subspace (when non-vanishing), with proportionality factor given by $-(\M+\B)/(8\pi)$. If instead $\M+\B=0$, then $j_a=0$ independently of $\mathcal D_a\Xi$, so the last row of Table~\ref{tab:traction-classification} applies. The causal character of $j_a$ therefore determines which of its frame-dependent projections can be removed by changing the timelike observer. The four distinct cases are collected in Table~\ref{tab:traction-classification}.

\begin{table}[!htbp]
\begingroup
\small
\setlength{\tabcolsep}{5pt}
\renewcommand{\arraystretch}{1.45}

\caption{Causal classification of the leaf-tangential traction $j_a$
and the different resulting frames.}
\label{tab:traction-classification}

\begin{tabularx}{\textwidth}
{@{}L{2.7cm}L{4.0cm}C{2.8cm}Y@{}}
\toprule
Causal type
&
Observer(s)
&
Relation
&
Projections
\\
\midrule

$j^2<0$
&
Unique future-directed traction-rest observer
$
\displaystyle u_{(j)}^a
=
\pm\frac{j^a}{\sqrt{-j^2}}
$
with the sign fixed by the time orientation
&
$u_{(j)}^a\parallel j^a$
&
$\widetilde\pi_a^{(\phi)}=0$ and
$\left|\mathcal Q^{(\phi)}\right|=\sqrt{-j^2}$.
\\
\addlinespace[0.5em]

$j^2>0$
&
One-parameter family of future-directed unit timelike observers
tangent to the leaf
&
$u^aj_a=0$
&
$\mathcal Q^{(\phi)}=0$ and
$\widetilde\pi_a^{(\phi)}=j_a$.
\\
\addlinespace[0.5em]

$j^2 =0$, $j_a\neq0$
&
No traction-rest or vanishing-longitudinal-flux observer
&
\textemdash
&
Neither $\mathcal Q^{(\phi)}$ nor $\widetilde\pi_a^{(\phi)}$ can be set to zero in any scalar-adapted frame, and $\widetilde\pi_a^{(\phi)} \widetilde\pi_{(\phi)}^a = \left(\mathcal Q^{(\phi)}\right)^2$. \\
\addlinespace[0.5em]

$j_a=0$
&
Every scalar-adapted observer
&
\textemdash
&
$\mathcal Q^{(\phi)}=0=\widetilde\pi_a^{(\phi)}$ identically.
\\

\bottomrule
\end{tabularx}
\endgroup
\end{table}

Under $s^a\rightarrow-s^a$, one has $j_a\rightarrow-j_a$, so the sign in $u_{(j)}^a$ must be chosen to preserve the same future-directed observer. The causal class and all existence statements are independent of this orientation convention. The classification is pointwise and may change across regions where $j^2=0$ or $j_a=0$. Thus, $\mathcal Q^{(\phi)}$ and $\widetilde\pi_a^{(\phi)}$ are observer-dependent projections of the single observer-independent traction $j_a$, rather than two independent intrinsic response channels.

\subsection{Principal-curvature and leaf-Landau frames}
\label{app:principal-landau}

The intrinsic leaf stress satisfies~\eqref{eq:leaf-intrinsic-stress}, i.e.,
\begin{equation}
8\pi\tau^a{}_b=\M\K^a{}_b+\left(\mathcal P_{\ell}-\M\K\right)\ell^a{}_b \, ,
\label{eq:leaf-stress-mixed-form-app}
\end{equation}
reported again here for convenience.
Since $\ell^a{}_b$ acts as the identity on vectors tangent to a scalar leaf, $\tau^a{}_b$ is an affine function of $\K^a{}_b$ if $\M\neq0$. The two operators therefore have the same eigendirections and the same eigenspaces associated with real eigenvalues. A \emph{principal-curvature observer} is a future-directed unit timelike eigenvector of $\K^a{}_b$, whereas a \emph{leaf-Landau observer} is independently defined by the vanishing of the intrinsic leaf-spatial energy flux, $\mathcal Q_a^{(\phi)}=0$, in analogy with the standard Landau energy frame~\cite{Andersson:2020phh}.

For an arbitrary scalar-adapted observer, we have that
\begin{equation}
\K^a{}_b u^b=\mathcal A u^a+\alpha^a,
\qquad
8\pi\mathcal Q_a^{(\phi)}=-\M\alpha_a \, ,
\label{eq:principal-leaf-flux-app}
\end{equation}
which implies that
\begin{equation}
\K^a{}_b u^b\propto u^a
\quad\Longleftrightarrow\quad
\alpha_a=0
\quad\Longleftrightarrow\quad
\mathcal Q_a^{(\phi)}=0,
\qquad
(\M\neq0).
\label{eq:principal-leaf-landau-equivalence}
\end{equation}
The existence of such an observer is not guaranteed, as already noted in Sec.~\ref{subsec:causal-preferred-frames}.

For $\M=0$, Eq.~\eqref{eq:leaf-stress-mixed-form-app} reduces to
\begin{equation}
8\pi\tau_{ab}=\mathcal P_{\ell}\ell_{ab}.
\end{equation}
The intrinsic leaf stress is then isotropic, so every scalar-adapted observer is leaf-Landau, while the existence of a principal-curvature observer remains an independent property of $\K^a{}_b$. Thus, the leaf-Landau condition is automatic, but the full Landau condition is not: in addition to the identically vanishing leaf-spatial flux $\mathcal Q_a^{(\phi)}$, one must also impose $\mathcal Q^{(\phi)}=0$, equivalently $u^aj_a=0$.

\subsection{Compatibility and full scalar-adapted Landau frames}
\label{app:full-landau}

The full four-dimensional energy flux relative to a scalar-adapted observer is
\begin{equation}
q_a^{(\phi)}=\mathcal Q^{(\phi)}s_a+\mathcal Q_a^{(\phi)}.
\label{eq:full-energy-flux-app}
\end{equation}
Equation~\eqref{eq:traction-observer-decomposition-app} gives $u^aj_a=-\mathcal Q^{(\phi)}$. A scalar-adapted observer is therefore a Landau observer of the complete effective stress-energy tensor if and only if both flux components vanish. For $\M\neq0$, using Eq.~\eqref{eq:principal-leaf-landau-equivalence}, the necessary and sufficient criterion becomes
\begin{equation}
\K^a{}_b u^b\propto u^a,
\qquad
u^aj_a=0.
\label{eq:full-scalar-adapted-Landau-criterion-app}
\end{equation}
Equivalently, $\K^a{}_b$ must possess a timelike eigendirection orthogonal to $j_a$. Table~\ref{tab:landau-compatibility} gives the resulting pointwise classification.

\begin{table}[!htbp]
\begingroup
\small
\setlength{\tabcolsep}{5pt}
\renewcommand{\arraystretch}{1.45}

\caption{Pointwise compatibility of traction-adapted, principal-curvature, and full Landau frames, for $\M\neq0$.}
\label{tab:landau-compatibility}

\begin{tabularx}{\textwidth}
{@{}L{2.7cm}L{3.6cm}Y Y@{}}
\toprule
Causal type
&
Full Landau observer
&
Compatibility condition
&
Outcome
\\
\midrule

Timelike, $j^2<0$
&
Does not exist
&
The traction-rest observer is also a principal-curvature observer if
and only if $\K^a{}_b j^b\propto j^a$
&
In that compatible frame,
$
  q_a^{(\phi)}
  =
  \mathcal Q^{(\phi)}s_a
  \neq0
$
and hence it is not a full Landau frame.
\\
\addlinespace[0.5em]

Spacelike, $j^2>0$
&
Exists if and only if $\K^a{}_b$ possesses a timelike eigendirection
orthogonal to $j_a$
&
If the principal timelike eigendirection is unique, the condition
reduces to $u_{\mathrm P}^a j_a=0$
&
The resulting observer satisfies
$\mathcal Q^{(\phi)}
  =
  0
  =
  \mathcal Q_a^{(\phi)}$ and
  $\widetilde\pi_a^{(\phi)}
  =
  j_a$.
Thus the full energy flux vanishes, while the mixed stress survives.
\\
\addlinespace[0.5em]

Null, $j_a\neq0$
&
Does not exist
&
A principal-curvature observer, and hence a leaf-Landau observer, may
still exist.
&
The longitudinal flux $\mathcal Q^{(\phi)}$ cannot vanish in any
scalar-adapted frame.
\\
\addlinespace[0.5em]

$j_a=0$
&
Exists if and only if $\K^a{}_b$ possesses a timelike eigendirection
&
The traction imposes no additional restriction.
&
Every principal-curvature observer is also a full scalar-adapted
Landau observer.
\\

\bottomrule
\end{tabularx}
\endgroup
\end{table}

When $\M=0$, the leaf-spatial flux $\mathcal Q_a^{(\phi)}$ vanishes identically. A full scalar-adapted Landau observer then exists if and only if a unit timelike vector tangent to the leaf can be chosen orthogonal to $j_a$, namely when $j^2>0$ or $j_a=0$. The static spherically symmetric realization of this classification, in which the static observer is simultaneously a principal-curvature, leaf-Landau, and full Landau, is discussed in Sec.~\ref{subsec:static-scalar-layer}.

\bigskip
\bigskip

\bibliographystyle{utphys}
\bibliography{references}

\end{document}